\documentclass{article}

\usepackage{arxiv}

\usepackage[utf8]{inputenc} 
\usepackage[T1]{fontenc}    
\usepackage{booktabs}       
\usepackage{amsmath}
\usepackage{amsfonts}       
\usepackage{nicefrac}       
\usepackage{microtype}      
\usepackage{doi}
\usepackage{natbib}
\usepackage[english]{babel}
\usepackage{enumitem}
\usepackage{graphicx}
\usepackage{subcaption}
\usepackage{wasysym}
\usepackage{multimedia}
\usepackage{amsfonts}
\usepackage{amssymb}
\usepackage{amsthm}
\usepackage{libertine}
\usepackage{lmodern}
\usepackage{fancyhdr}
\usepackage{color}
\usepackage{dsfont}
\usepackage{bbm}
\usepackage{multicol}
\usepackage{bm} 
\usepackage{ulem}
\usepackage{tabularx} 
\usepackage{mathrsfs}
\usepackage{stmaryrd}
\usepackage{array,multirow,makecell}
\setcellgapes{1pt}
\makegapedcells%
\usepackage{systeme}
\usepackage{comment}
\usepackage{tcolorbox}
\usepackage{cases}
\usepackage{algorithm}
\usepackage{algorithmic}
\usepackage{mathtools}
\usepackage{colortbl}
\usepackage{hhline}

\usepackage[toc,page,title,titletoc,header]{appendix} 

\newtheorem{prop}{Proposition}

\newtheorem{lemma}{Lemma}

\usepackage{afterpage}

\title{Information criteria exploiting latent structure for model selection in Structural Equation Models}

\date{} 					

\author{ Marion Naveau \\ L'Institut Agro, Université de Rennes, CNRS, UMR 6625 IRMAR, Rennes, France. \\
    \And Magalie Houée-Bigot \\ L'Institut Agro, Rennes, France. \\
    \And Matthieu Marbac \\ Université Bretagne Sud, UMR CNRS 6205, LMBA, F-56000 Vannes, France. \\
    \And 
	Anouk Zancarini \\ IGEPP, INRAE, Institut Agro, Univ Rennes, 35653, Le Rheu, France. \\
	\And 
	Mathieu Emily \\ L'Institut Agro, Université de Rennes, CNRS, UMR 6625 IRMAR, Rennes, France. \\
}

\renewcommand{\headeright}{} 
\renewcommand{\undertitle}{} 
\renewcommand{\shorttitle}{Model selection in Structural Equation Models (SEM)}

\hypersetup{
pdftitle={Model selection in Structural Equation Models (SEM)},
pdfsubject={},
pdfauthor={Mathieu ~Emily, Magalie ~Houée-Bigot, Matthieu ~Marbac, Marion ~Naveau, Anouk ~Zancarini},
pdfkeywords={Structural Equation Models,  Model selection, ICL criterion,  BIC criterion, Importance Sampling},
}

\def\keywordname{{\bfseries Keywords}}%
\def\keywords#1{\par\addvspace\medskipamount{\rightskip=0pt plus1cm
\def\and{\ifhmode\unskip\nobreak\fi\ $\cdot$
}\noindent\keywordname\enspace\ignorespaces#1\par}}

\begin{document}
\maketitle

\begin{abstract}
Structural equation models (SEM) are widely used to describe dependency structures between latent variables, making model selection a key issue in many applications. Existing information criteria are generally based on the integrated observed-data likelihood and therefore do not explicitly account for the latent structure of the model. In this paper, we propose two new information criteria derived from the integrated complete-data likelihood. The first adapts the Integrated Completed Likelihood criterion to Gaussian SEM, while the second proposes an alternative approach to approximating the integrated observed-data log-likelihood by incorporating latent structural information and using an importance sampling strategy. Their performance is assessed through an extensive simulation study covering null, direct, indirect and complete latent structures under different sample sizes and signal strengths. The results show that the proposed importance sampling strategy provides robust and competitive model selection across a wide range of scenarios, whereas the proposed ICL criterion is particularly effective for recovering latent dependency structures when the latent variables are accurately estimated. These findings demonstrate the potential benefits of explicitly exploiting the latent structure when developing information criteria for structural equation models.

\end{abstract}

\keywords{Structural Equation Models \and Model selection \and Latent variables model \and ICL criterion \and Importance Sampling}

\section{Introduction}

Structural Equation Models (SEM), first introduced by \cite{joreskog1970general}, provide a flexible statistical framework for modelling complex dependency structures involving latent variables. 
They are widely used in several scientific fields such as psychology, social sciences, ecology and econometrics, where some underlying mechanisms cannot be directly observed but only indirectly measured through manifest variables. SEM combine two complementary components: a measurement model linking observed variables to latent variables, and a structural model describing the dependency relationships between the latent variables themselves. 

A central issue in SEM is model selection \citep{bollen2014bic, lin_selecting_2017, preacher_model_2023}. In practice, the dependency structure between latent variables is rarely known in advance and must therefore be inferred from the data. This problem consists in determining which relationships should be included in the latent graph structure. Classical approaches generally rely on information criteria such as the Akaike's Information criterion (AIC, \cite{akaike1974new}) and the Bayesian Information Criterion (BIC, \cite{schwarz1978estimating}). The AIC and BIC criteria evaluate models by balancing fit and parsimony. Both include a negative log-likelihood term, which assesses a model’s ability to reproduce the observed data, as well as a complexity penalty that increases with the number of parameters. Models with lower AIC or BIC values are generally preferred, as they offer a better balance between explanatory power and simplicity. Additional discussion of these criteria within the SEM framework is provided by \cite{raftery1995bayesian, haughton1997information}. Despite their popularity in structural equation modeling, only a limited number of studies have evaluated the performance of AIC, BIC and their extensions for model selection \citep{bollen2014bic, lin_selecting_2017}. However, these studies focus on situations in which a latent relationship is truly present and the objective is to assess the ability of the criteria to detect this signal. Much less attention has been paid to the complementary situation where no relationship exists between latent variables and the true model corresponds to a simpler structure. From a model selection perspective, correctly identifying the absence of a signal is equally important, as overfitting may lead to the detection of spurious latent dependencies. Consequently, a comprehensive evaluation of model selection criteria should consider both their ability to recover existing relationships and their ability to avoid selecting unnecessary ones when no signal is present.

The popularity of BIC stems from its interpretation as an asymptotic approximation of the integrated observed-data likelihood, making it a natural tool for model comparison. Nevertheless, SEMs present an additional level of complexity due to the presence of latent variables. Since these variables are not observed, the marginal likelihood underlying BIC is obtained after integrating out both the model parameters and the latent structure. As a result, the latent variable structure itself does not explicitly contribute to the model selection criterion. This motivates the investigation of alternative approaches based on the integrated complete-data likelihood, where the latent structure is directly taken into account. This naturally raises the following question: can model selection in SEM benefit from considering the integrated complete-data likelihood instead of the integrated observed-data likelihood alone?


In this work, we investigate the use of the Integrated Completed Likelihood (ICL, \cite{biernacki2010exact}) criterion for model selection in Gaussian SEM. We derive a closed-form expression of the integrated complete-data likelihood under specific priors and study its practical implementation. Because the integrated complete likelihood is then tractable, we also propose an approximation strategy of the integrated observed-data likelihood based on importance sampling, allowing us to construct a new criterion relying on the integrated complete-data likelihood. So, in this article, we revisit the model selection problem in SEM from this perspective. A comprehensive evaluation of these two new approaches is compared to classical model selection criteria in a series of simulation experiments involving both signal and no-signal scenarios. We investigate whether explicitly accounting for the latent variable structure through the integrated complete-data likelihood can improve model selection performance.

This paper is organized as follows. Section \ref{sec_model} describes the structural equation model to introduce the notation. Section \ref{sec_model_select} details our model selection strategy with the two proposed criteria. Next, Section \ref{sec_simu} evaluates the selection performance of our methods through an intensive simulation study (with signal and no-signal scenarios) and presents comparisons with existing methods. Finally, Section \ref{sec_conclusion} concludes with a summary discussion and prospects for future research. The proof of the main proposition is postponed to Appendix \ref{app_proof}.

\section{Model description}
\label{sec_model}

\subsection{Structural Equation Model (SEM)}
\label{sec_SEM}

Let $X_i \in \mathbb{R}^p$ denote the observed variables and $Z_i \in \mathbb{R}^q$ the latent variables associated with individual $i \in \{1,\dots,n\}$. We consider the following Gaussian Structural Equation Model (SEM). The measurement model, which defines the manifest variables $X_i$ conditionally on $Z_i$, satisfies: 
\[
X_i = \Lambda Z_i + \varepsilon_i, \quad \varepsilon_i \sim \mathcal{N}_p(0,\Sigma)
\]
where $\Lambda\in \mathbb{R}^{p\times q}$ is the loading matrix, and $\Sigma$ is a diagonal covariance matrix:
\[
\Sigma = \mathrm{diag}(\sigma_1^2,\dots,\sigma_p^2).
\] 
The structural model, linking the latent variables to one another, is defined as:
\[
Z_i = (I_q - B)^{-1} \xi_i, \quad \xi_i \sim \mathcal{N}_q(0, \Gamma)
\]
where $I_q$ is the identity matrix of size $q \times q$, $B \in \mathbb{R}^{q\times q}$ is the matrix that describes the dependencies between latent variables, and $\Gamma$ is a diagonal covariance matrix. To ensure that the model is identifiable, we assume that $B$ is strictly lower triangular (to ensure that the latent dependency graph is acyclic) and that $\Gamma = I_q$.

The sparsity patterns of $\Lambda$ and $B$ define the SEM structure. In particular, each observed variable is assumed to depend on a single latent variable only, meaning that each row of $\Lambda$ contains exactly one non-zero coefficient. We consider a sparse version of $B$ where the non-zero parameters are given by $\bm{f} \in \{0,1\}^{q \times q}$ such that 
\[
f_{h\ell} = 1 \iff B_{h\ell} \neq 0,
\quad f_{h\ell} = 0 \text{ if } h \le \ell.
\]
Let $d_h = \sum_{\ell=1}^q f_{h\ell}$ be the number of free parameters for row $h$ of $B$, and $\bm{m}_h \in \mathbb{R}^{d_h}$ the $d_h$-dimensional vector composed of the non-zero elements in row $h$ of $B$. Likewise, let $\bm{\omega}=(\bm{\omega}_1, \dots, \bm{\omega}_p)^\top \in \mathbb{R}^{p\times q}$ be the binary indicator matrix that defines sparsity in $\Lambda$:
\[
\bm{\omega}_{jh} = 1 \iff \Lambda_{jh} \neq 0,
\quad \bm{\omega}_{jh} = 0 \text{ otherwise}.
\]

The parameters to be estimated are $\theta = \{\Lambda, \Sigma, B\}$. 

\subsection{Observed-data distribution and parameter estimation}
\label{sec_estimation}

Under the previous assumptions, the latent variables follow 
\[Z_i \mid \theta \sim \mathcal{N}_q(0,\Psi)\] with
\[
\Psi = (I_q - B)^{-1} \left[(I_q - B)^{-1}\right]^\top.
\]
Consequently, the observed variables satisfy:
\begin{equation}
    \label{eq_likelihood}
    X_i \mid \theta \sim \mathcal{N}_p(0,S_{11}),
\end{equation}
where
\begin{equation}
\label{eq_S11}
    S_{11} = \Sigma + \Lambda \Psi \Lambda^\top.
\end{equation}

Parameter estimation is performed through maximum likelihood estimation. 
The classical LISREL approach estimates the parameter vector $\theta = (\Lambda, B, \Sigma)$ by minimizing the discrepancy between the empirical covariance matrix and the theoretical covariance matrix implied by the model \citep{joreskog1970general}.

In practice, the optimization is carried out numerically using standard SEM software such as the \texttt{lavaan} package in \texttt{R}.

\section{Model selection in SEM}
\label{sec_model_select}

In the following, we consider a $n$-sample $X = (X_1, \dots, X_n)^\top \in \mathbb{R}^{n \times p}$ of observed variables and $Z~=~(Z_1,\dots,Z_n)^\top\in \mathbb{R}^{n \times q}$ the latent variables. We denote the elements of the matrices $X$ and $Z$ by $(x_{ij})_{\substack{1\leq i \leq n \\ 1 \leq j \leq p}}$ and $(z_{ih})_{\substack{1\leq i \leq n \\ 1 \leq h \leq q}}$, respectively.

\subsection{Bayesian Information Criterion : approximate integrated observed-data likelihood}

Model selection in SEM aims to determine the graph structure encoded by the sparsity patterns of $B$. Given a collection of candidate models $(\mathcal{M}_g)_g$, the objective is to select the model providing the best compromise between goodness-of-fit and model complexity.

Most approaches rely on penalized likelihood criteria of the form:
\begin{equation}
\label{eq_IC}
    \mathrm{Crit}(\mathcal{M}) = \log p_\mathcal{M}(X \mid \hat \theta_\mathcal{M}) - \frac{1}{2} \mathrm{pen}(\mathcal{M})
\end{equation}
where $p_\mathcal{M}(X \mid \hat{\theta}_\mathcal{M})$ denotes the likelihood evaluated at the maximum likelihood estimator under model $\mathcal{M}$, and $\mathrm{pen}(\mathcal{M})$ is a penalty term of the model $\mathcal{M}$. The goal is to select the model that maximizes this criterion. The most widely used criterion is the Bayesian Information Criterion (BIC) \citep{schwarz1978estimating}:
\[BIC(\mathcal{M}) = \log p_\mathcal{M}(X \mid \hat \theta_\mathcal{M}) - \frac{k_\mathcal{M}}{2} \log(n),\]
where $k_\mathcal{M}$ is the number of free parameters in model $\mathcal{M}$ and $n$ is the sample size. It should be noted that the BIC was obtained, under the assumption of a uniform prior distribution over the candidate models, from a Bayesian approximation of the integrated observed-data likelihood $p(X \mid \mathcal{M})$ 
\[p(X \mid \mathcal{M}) = \int_\Theta p_\mathcal{M}(X \mid \theta) p(\theta) d\theta,\]
using Laplace's method:
\begin{equation}
\label{eq_laplace}
  \log p(X \mid M) \approx 
\log p_\mathcal{M}(X \mid \hat{\theta}_\mathcal{M}) - \frac{k_\mathcal{M}}{2} \log n.
\end{equation}

\subsection{Exact integrated complete-data likelihood}

\subsubsection{Motivation}

In SEM, since the likelihood \eqref{eq_likelihood} has a closed-form expression, it is possible to explicitly calculate standard information criteria like \eqref{eq_IC}. These standard criteria, such as BIC, are based solely on the integrated observed-data likelihood $p(X \mid \mathcal{M})$. However, in SEM, the latent variables contain structural information regarding the network of dependencies that we wish to identify. This suggests that model selection may benefit from considering the integrated complete-data likelihood $p(X, Z \mid \mathcal{M})$ instead of the integrated observed-data likelihood alone $p(X \mid \mathcal{M})$.

\subsubsection{Integrated Completed Likelihood (ICL) criterion}

In a Bayesian framework, the integrated complete-data likelihood for latent variable models is defined as:
\begin{equation}
\label{eq_Int_ICL}
    p(X,Z \mid \mathcal{M})
=
\int p(X,Z\mid \theta ;\mathcal{M})p(\theta\mid \mathcal{M})d\theta.
\end{equation}
where $p(\theta\mid \mathcal{M})$ denotes the prior distribution on $\theta$.

For many latent variable models, the exact computation of this integral is analytically intractable. A classical approximation is obtained using Laplace's method, as for BIC criterion, leading to the asymptotic expression
\[
\log p(X,Z \mid \mathcal{M})
=
\log p(X,Z;\widehat{\theta}_\mathcal{M})
-
\frac{k_\mathcal{M}}{2}\log n
+
O_p(1),
\]
where $k_\mathcal{M}$ denotes the number of free parameters in model $\mathcal{M}$, $n$ is the sample size, and $\widehat{\theta}_\mathcal{M}$ is the maximum likelihood estimator obtained from the observed data in model $\mathcal{M}$. Since the latent variables are not observed, they are replaced by their Maximum A Posteriori (MAP) estimates $\widehat{Z}$. The resulting criterion is the Integrated Completed Likelihood (ICL), called ICLbic criterion \citep{biernacki2000assessing, biernacki2010exact}:
\[
\mathrm{ICLbic}
=
\log p(X,\widehat{Z};\widehat{\theta}_\mathcal{M})
-
\frac{k_\mathcal{M}}{2}\log n.
\]

Unlike BIC, which only relies on the observed-data likelihood, ICLbic incorporates the latent structure through the complete-data likelihood. As a consequence, in the classification context in which ICLbic was introduced, ICLbic generally favors models associated with clearly distinct latent structures.

Fortunately, in the Gaussian SEM framework considered in this work, the integrated complete-data likelihood \eqref{eq_Int_ICL} can be derived explicitly under suitable conjugate prior distributions. As shown in the next sections, this allows us to obtain a closed-form expression for $p(X,Z \mid \mathcal{M})$ and therefore to compute the corresponding ICL criterion exactly rather than through an asymptotic approximation.

\subsubsection{Prior specification}
\label{sec_priors}

Considering a Bayesian framework for the SEM model presented in section \ref{sec_SEM}, the prior distribution of $\theta$ is defined by

\[
\pi(\theta \mid \vartheta, \bm{f}, \bm{\omega})
=
\pi(\Lambda,\Sigma \mid \vartheta,\bm{\omega})
\pi(B \mid \vartheta,\bm{f}),
\]

with $\vartheta$ denotes all the hyperparameters, and where $\bm{f}$ and $\bm{\omega}$ encode the sparsity structures of $B$ and $\Lambda$, respectively.

Recall that only one element of each line of $\Lambda$ is not zero. We consider that the prior distribution of non-zero elements of $\Lambda$ is a product of independent Gaussian distributions given $\Sigma$ and that the prior distribution of the diagonal elements of $\Sigma$ is a product of independent inverse gamma distributions, so that

\begin{align}
\pi(\Lambda,\Sigma \mid \vartheta,\bm{\omega})
=
\prod_{j=1}^p
\Bigg(
g(\sigma_j^2; \alpha_j/2, \beta_j^2/2)
\prod_{h=1}^q
\left[
\phi_1(\Lambda_{jh}; \nu_j, \sigma_j^2 \delta_j^{-1})
\right]^{\omega_{jh}}
\Bigg),
\end{align}
where $g(x; a,b)=b^a(1/x)^{a+1}\exp(-b/x)/\Gamma(a)$ is the density of an inverse gamma distribution, for some hyperparameters $\alpha_j>0$, $\beta^2_j>0$, $\delta_j>0$.

For $B$, we also consider independent gaussian distributions on the non-zero elements: we have the following prior

\[\pi(B \mid \vartheta,\bm{f}) =\prod_{h=1}^q
\phi_{d_h}(\bm{m}_h; \bm{\mu}_h, \bm{\kappa}_h^{-1}), \]
where $\bm{\mu}_h \in \mathbb{R}^{d_h}$ and $\bm{\kappa}_h \in \mathbb{R}^{d_h \times d_h}$.

\subsubsection{Exact ICL in SEM : a closed-form integrated complete-data likelihood}

\begin{prop}
\label{propICL}
In the SEM model presented in Section \ref{sec_SEM} and with the priors presented in Section \ref{sec_priors}, the integrated complete-data likelihood is equal to
\begin{align*}
    p(X,Z \mid \vartheta,\bm{f}, \bm{\omega})
=
&\frac{1}{\pi^{(p+q)n/2} 2^{qn/2}}
\prod_{h=1}^q \exp \left\{ -\frac{1}{2} \sum_{i=1}^n z_{ih}^2 \right\}
\left[
\frac{\det^{1/2}(\bm{\kappa}_h)}{\det^{1/2}(\tilde S_h)}
\exp\left(-\frac{\tilde t_h}{2}\right)
\right]^{\mathds{1}_{\{d_h>0\}}} \times \\
&\prod_{j=1}^p
\left[
\frac{\delta_j^{1/2}}{s_j^{1/2}}
\frac{\Gamma(n/2+\alpha_j/2)}{\Gamma(\alpha_j/2)}
\frac{\beta_j^{\alpha_j}}
{(\beta_j^2+t_j)^{n/2+\alpha_j/2}}
\right].
\end{align*}
where $t_j=\sum_{i=1}^n x_{ij}^2 + \delta_j \nu_j^2 - s_j a_j^2$, $s_j=\delta_j + \bm{\omega}_j^\top \sum_{i=1}^n Z_i Z_i^\top \bm{\omega}_j$, $a_j=s_j^{-1} (\sum_{i=1}^n \bm{\omega}_j^\top Z_i x_{ij} + \delta_j \nu_j)$, $\tilde t_h = \bm{\mu}_h^\top \bm{\kappa}_h \bm{\mu}_h - \bm{\zeta}_h^\top \tilde S_h \bm{\zeta}_h$, $\tilde S_h = \tilde M_h + \bm{\kappa}_h$, $\bm{\zeta}_h = \tilde S_h^{-1}(\bm{\tilde m}_h +\bm{\kappa}_h \bm{\mu}_h)$, $\tilde M_h$ is the symmetric matrix of dimension $d_h \times d_h$ composed of rows and columns of matrix $\sum_{i=1}^n Z_i Z_i^\top$ having a index $\ell$ such that $f_{h\ell}=1$ and $\bm{\tilde m}_h$ is the $d_h$-dimensional vector composed of the elements $\ell$ of
row $h$ of $\sum_{i=1}^n z_{ih} Z_i$ such that $f_{h\ell}=1$. 
\end{prop}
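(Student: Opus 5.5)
The plan is to factorize the complete-data likelihood and use the prior independence, so that the integral in \eqref{eq_Int_ICL} becomes a product of one-dimensional or low-dimensional conjugate integrals. Write
\[
p(X,Z\mid\theta)=p(X\mid Z,\Lambda,\Sigma)\,p(Z\mid B).
\]
Given $Z$, the coordinates $x_{ij}$ are independent $\mathcal N(\Lambda_j^\top Z_i,\sigma_j^2)$. Since row $j$ of $\Lambda$ has a single nonzero entry, located by $\bm\omega_j$, we have $\Lambda_j^\top Z_i=\lambda_j\,\bm\omega_j^\top Z_i$ with a scalar $\lambda_j$. For the structural part, $Z_i=BZ_i+\xi_i$ with $\Gamma=I_q$ gives $p(Z_i\mid B)=\prod_{h}\phi_1(z_{ih}; \bm m_h^\top Z_{i,\bm f_h},1)$, where $Z_{i,\bm f_h}$ collects the coordinates $\ell$ with $f_{h\ell}=1$. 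The Jacobian is $\det(I_q-B)=1$ because $B$ is strictly lower triangular. The prior also factorizes: over $j$ for $(\lambda_j,\sigma_j^2)$ and over $h$ for $\bm m_h$. Hence
\[
p(X,Z\mid\mathcal M)=\prod_{j=1}^p I_j\prod_{h=1}^q J_h,
\]
and the two families of integrals can be treated separately.

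\textbf{Step 1 (structural integrals $J_h$).} If $d_h=0$, there is no parameter, and $J_h=(2\pi)^{-n/2}\exp(-\tfrac12\sum_i z_{ih}^2)$. If $d_h>0$, expand
\[
\sum_i(z_{ih}-\bm m_h^\top Z_{i,\bm f_h})^2=\sum_i z_{ih}^2-2\bm m_h^\top\tilde{\bm m}_h+\bm m_h^\top\tilde M_h\bm m_h,
\]
and add the Gaussian prior exponent $-\tfrac12(\bm m_h-\bm\mu_h)^\top\bm\kappa_h(\bm m_h-\bm\mu_h)$. Completing the square in $\bm m_h$ gives precision $\tilde S_h$, mean $\bm\zeta_h$, and leftover constant $\bm\mu_h^\top\bm\kappa_h\bm\mu_h-\bm\zeta_h^\top\tilde S_h\bm\zeta_h=\tilde t_h$. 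The Gaussian integral contributes $(2\pi)^{d_h/2}\det^{-1/2}(\tilde S_h)$, which cancels against the prior normalization $(2\pi)^{-d_h/2}\det^{1/2}(\bm\kappa_h)$. This yields the bracketed factor, together with $\exp(-\tfrac12\sum_i z_{ih}^2)$ and $(2\pi)^{-n/2}$.

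\textbf{Step 2 (measurement integrals $I_j$).} This is the normal–inverse-gamma conjugate computation. First integrate $\lambda_j$ given $\sigma_j^2$. The exponent is
\[
-\tfrac{1}{2\sigma_j^2}\Big[\sum_i(x_{ij}-\lambda_j\bm\omega_j^\top Z_i)^2+\delta_j(\lambda_j-\nu_j)^2\Big].
\]
Completing the square gives precision $s_j/\sigma_j^2$, mean $a_j$ and remainder $t_j$, so the $\lambda_j$ integral equals $(\delta_j/s_j)^{1/2}\exp(-t_j/(2\sigma_j^2))$. The remaining $\sigma_j^2$ integral is
\[
\int(2\pi\sigma_j^2)^{-n/2}e^{-t_j/(2\sigma_j^2)}\,g(\sigma_j^2;\alpha_j/2,\beta_j^2/2)\,d\sigma_j^2 ,
\]
which is an inverse-gamma kernel with shape $(n+\alpha_j)/2$ and rate $(\beta_j^2+t_j)/2$. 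It produces
\[
(2\pi)^{-n/2}\frac{\Gamma((n+\alpha_j)/2)}{\Gamma(\alpha_j/2)}\frac{(\beta_j^2/2)^{\alpha_j/2}}{((\beta_j^2+t_j)/2)^{(n+\alpha_j)/2}}.
\]

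\textbf{Step 3 (collecting constants).} Collect the powers of $2$ and $\pi$ over all factors and compare with the prefactor $\pi^{-(p+q)n/2}2^{-qn/2}$. I expect this bookkeeping, together with the careful handling of the variance scaling $\sigma_j^2\delta_j^{-1}$ inside the completed square, to be the main obstacle. The factor $2^{n/2}$ in $I_j$ cancels against $(2\pi)^{-n/2}$ only up to powers of $\pi$, whereas the $J_h$ leave $(2\pi)^{-n/2}$ each. I would verify that these combine exactly to the stated prefactor and that $\beta_j^{\alpha_j}$ emerges with the correct exponent. I would also double-check that the definition of $t_j$ uses $s_ja_j^2$ consistently with $a_j=s_j^{-1}(\cdot)$.
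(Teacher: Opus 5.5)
Your proposal is correct and follows essentially the paper's route. Both factorize into measurement integrals, handled by normal--inverse-gamma conjugacy for each $j$ (the paper packages this as a lemma), and structural integrals, handled by completing the square in $\bm m_h$; your direct use of the unit Jacobian $\det(I_q-B)=1$ and joint completion of the square with the prior is a slightly cleaner version of the paper's computation via $\Psi^{-1}=(I_q-B)^\top(I_q-B)$, avoiding its intermediate $\tilde M_h^{-1}$. The bookkeeping you flag in Step 3 does close: in each $I_j$ the powers of $2$ cancel exactly ($2^{-n/2}\cdot 2^{-\alpha_j/2}\cdot 2^{(n+\alpha_j)/2}=1$), leaving $\pi^{-n/2}$, while each $J_h$ leaves $(2\pi)^{-n/2}$, giving precisely the prefactor $\pi^{-(p+q)n/2}2^{-qn/2}$.
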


The proof is available in Appendix \ref{app_proof}.

Thus, as for the calculation of ICLbic, we replace $Z$ with its MAP estimator $\hat Z$ to define the ICL criterion as:
\begin{equation}
\label{eq_ICL}
\mathrm{ICL}(\mathcal{M})
=
\log p(X,\hat{Z}\mid \mathcal{M}),
\end{equation}
where $\mathcal{M}$ is defined by $\bm{f}$ and $\bm{\omega}$. The selected model is then obtained by maximizing the resulting criterion. The following paragraph details the calculation of the MAP estimator of $Z$.

\subsubsection*{Latent estimation}

In this SEM model it is easy to sample $Z$ from its conditional distribution $Z \mid X=x, \theta$ for any parameter $\theta$. Indeed, by the properties of Gaussian vectors, the joint distribution of $(X, Z)$ given $\theta$ is a centered $(p+q)$-dimensional Gaussian distribution with covariance
\[
S =
\begin{pmatrix}
S_{11} & S_{12} \\
S_{21} & \Psi
\end{pmatrix} \in \mathbb{R}^{(p+q) \times (p+q)}
\]
where $S_{11}$ is given by equation \eqref{eq_S11}, 
\[
S_{12} = \Lambda \Psi,
\quad
S_{21} = S_{12}^\top.
\]
So, the conditional distribution of $Z \mid X=x, \theta$ is Gaussian with mean $\upsilon_{\theta}(x)$ and covariance matrix $\Upsilon_{\theta}$ where

\[
\upsilon_{\theta}(x) = S_{21} S_{11}^{-1} x,
\quad
\Upsilon_{\theta} = \Psi - S_{21} S_{11}^{-1} S_{12}.
\]

Thus, the MAP estimator $\hat Z$ of $Z$ that maximizes its posterior probability is equal to its posterior expectation, which yields:
\[
\hat Z (x)= \arg\max_Z p(Z|X=x, \theta) = \upsilon_{\theta}(x).
\]

\subsection{Importance Sampling approximation of the integrated observed-data likelihood}

Now that the integrated complete-data likelihood can be computed analytically thanks to Proposition \ref{propICL}, it becomes possible to exploit this expression in order to approximate the integrated observed-data likelihood. Unlike the standard BIC approximation, which relies on a Laplace expansion around the maximum likelihood estimator, the proposed approach directly incorporates the latent structure of the SEM into the estimation procedure.

More precisely, we aim at approximating the integrated observed-data likelihood
\[
p(X \mid \mathcal{M})=\int p(X,Z \mid \mathcal{M})\,dZ,
\]
by taking advantage of the explicit expression previously derived for the integrated complete-data likelihood $p(X,Z \mid \mathcal{M})$. To achieve this, we rely on an importance sampling strategy. By introducing a proposal distribution $h(Z)$, we obtain
\[
p(X \mid \mathcal{M})
=
\int \frac{p(X,Z \mid \mathcal{M})}{h(Z)}h(Z)dZ.
\]

An unbiased Monte Carlo approximation is therefore given by
\[
p(X \mid \mathcal{M})
\approx
\frac{1}{R}\sum_{r=1}^R
\frac{p(X,Z^{(r)}\mid \mathcal{M})}{h(Z^{(r)})},
\]
where $Z^{(1)},\dots, Z^{(R)}$ is the sample set from the distribution $h$. To reduce numerical instability, computations are performed on the logarithmic scale using a centering strategy based on the dominant importance weight:
\[Z^* = \underset{{Z^{(r)},  1 \leq r \leq R} }{\text{argmax}} \dfrac{p(X,Z^{(r)}\mid \mathcal{M})}{h(Z^{(r)})}.\]
More precisely, we approximate 
\[\log p(X \mid \mathcal{M}) \approx \log\left( \dfrac{p(X,Z^*\mid \mathcal{M})}{h(Z^*)} \right) + \log \left[ \dfrac{1}{R} \sum_{r=1}^R \exp \left(\log\left( \dfrac{p(X,Z^{(r)}\mid \mathcal{M})}{h(Z^{(r)})} \right) - \log\left( \dfrac{p(X,Z^*\mid \mathcal{M})}{h(Z^*)} \right) \right) \right]. \]

A natural choice for the proposal distribution is the posterior distribution of the latent variables under the fitted SEM:
\[
h(Z)=p(Z\mid X,\widehat{\theta}),
\]
which is Gaussian $\mathcal{N}_q(\upsilon_{\hat\theta}(X),\Upsilon_{\hat\theta})$ in our setting. 

Therefore, in what follows, we will use the term ``logIL.IS criterion`` (log Integrated Likelihood - Importance Sampling) to refer to the model selection strategy that consists of maximizing the following expression:
\begin{equation}
\label{eq_BIC_IS}
    \mathrm{logIL.IS}(\mathcal{M}) = \log\left( \dfrac{p(X,Z^*\mid \mathcal{M})}{h(Z^*)} \right) + \log \left[ \dfrac{1}{R} \sum_{r=1}^R  \exp \left(\log\left( \dfrac{p(X,Z^{(r)}\mid \mathcal{M})}{h(Z^{(r)})} \right) - \log\left( \dfrac{p(X,Z^*\mid \mathcal{M})}{h(Z^*)} \right) \right) \right].
\end{equation}

\section{Numerical experiments}
\label{sec_simu}

This section examines the model selection performance of the two methods proposed in this paper, ICL \eqref{eq_ICL} and logIL.IS \eqref{eq_BIC_IS}, in the Gaussian SEM setting defined in Section \ref{sec_SEM}, and compares them with existing state-of-the-art approaches.

\subsection{Simulation design}
\label{sec_simu_design}

The data are generated according to the following SEM model: for $1\leq i \leq n$ , 

\begin{equation*}
    \left\{
    \begin{array}{ll}
        X_i = \Lambda Z_i + \varepsilon_i, \quad &\varepsilon_i \sim \mathcal{N}_{p}(0,\Sigma), \quad \Sigma = \mathrm{diag}(\sigma_1^2, \dots, \sigma_p^2), \\
        Z_i = B Z_i + \xi_i, \quad &\xi_i \sim \mathcal{N}_{q}(0, I_q),
    \end{array}
\right.
\end{equation*}
with $q=3$ latent variables, $p=12$ observed variables, corresponding to four manifest variables per latent variable, and $B$ and $\Lambda$ of the following form
\[B=\begin{pmatrix}
0      & 0      & 0 \\
b_{21}  & 0      & 0 \\
b_{31}  & b_{32}  & 0
\end{pmatrix}, \quad \Lambda =
\left(
\begin{array}{cccccccccccc}
1 & 1 & 1 & 1 & 0 & 0 & 0 & 0 & 0 & 0 & 0 & 0 \\
0 & 0 & 0 & 0 & 1 & 1 & 1 & 1 & 0 & 0 & 0 & 0 \\
0 & 0 & 0 & 0 & 0 & 0 & 0 & 0 & 1 & 1 & 1 & 1
\end{array}
\right)^\top,\]
where different latent dependency structures are investigated through the coefficients of the structural matrix $B$ and the loading matrix is chosen such that each observed variable depends on a single latent variable. 

To test the sensitivity of the methods to different parameter values, we test various simulation scenarios: different sample sizes $n\in \{70,250,1000\}$, several signal strengths for the latent effects $b_{31}\in\{0,0.1,0.4\}$, $b_{21}=b_{32} \in \{0,\sqrt{0.1},\sqrt{0.4}\}$, and different levels of residual error $\sigma_1^2=\dots=\sigma_p^2\in\{0.1,0.3\}$. For each configuration, $100$ datasets are simulated and the following candidate models are fitted and compared using ICL, logIL.IS and several model selection criteria presented in the next section. These models are represented in Figure \ref{fig:mod_collection}. The first model, called "complete model", corresponds to the case where $b_{31} \neq 0$ and $b_{21}=b_{32}\neq 0$. The second model, called "indirect model", corresponds to the case where $b_{31} = 0$ but $b_{21}=b_{32}\neq 0$. The "direct model" corresponds to the case where $b_{31} \neq 0$ but $b_{21}=b_{32}= 0$. And the null model corresponds to the case where $b_{31} = b_{21}=b_{32} =0$. This collection of models is inspired by a problem in the field of agroecology where the goal is to determine whether the variable $z_1$ is directly related to $z_3$, or only through $z_2$, or both, or has no relationship to it at all.

\begin{figure}[h!]
    \centering
    
    \begin{subfigure}{0.45\textwidth}
        \centering
        \includegraphics[width=\linewidth]{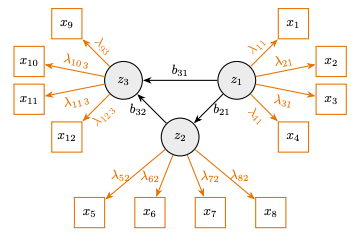}
        \caption{Complete model}
        \label{fig:Comp_Mod}
    \end{subfigure}
    \hfill
    \begin{subfigure}{0.45\textwidth}
        \centering
        \includegraphics[width=\linewidth]{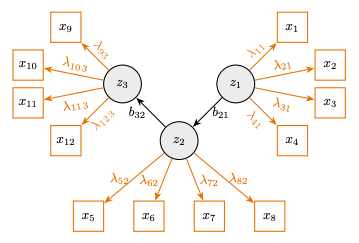}
        \caption{Indirect model}
        \label{fig:Ind_Mod}
    \end{subfigure} \\

    \begin{subfigure}{0.45\textwidth}
        \centering
        \includegraphics[width=\linewidth]{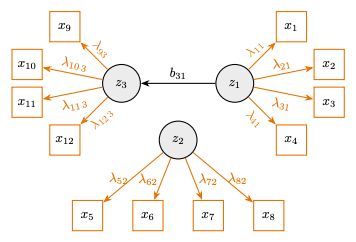}
        \caption{Direct model}
        \label{fig:Dir_Mod}
    \end{subfigure}
    \hfill
    \begin{subfigure}{0.45\textwidth}
        \centering
        \includegraphics[width=\linewidth]{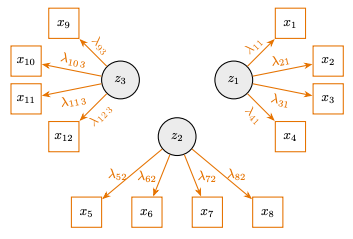}
        \caption{Null model}
        \label{fig:Null_Mod}
    \end{subfigure} \\
    \caption{Models collection}
    \label{fig:mod_collection}
\end{figure}

\subsection{Model selection criteria under comparison}

To assess the performance of the two proposed approaches, namely ICL and logIL.IS, we compare them with several classical and refined likelihood-based model selection criteria commonly used in the literature. This section briefly recalls the considered criteria and specifies the hyperparameter choices used for the implementation of the proposed methods.

The two most commonly used model selection criteria are the Akaike Information Criterion (AIC, \cite{akaike1974new}) and the Bayesian Information Criterion (BIC, \cite{schwarz1978estimating}). Both criteria are derived within the maximum likelihood estimation framework described in Equation~\eqref{eq_IC}. AIC aims at minimizing the expected Kullback-Leibler divergence between the fitted and the true model, whereas BIC can be interpreted as an asymptotic approximation of the integrated observed-data likelihood. 
An important distinction between the two criteria is that BIC is asymptotically consistent for model selection under suitable regularity conditions, while AIC is not. Several extensions and refinements of these criteria have subsequently been proposed in order to improve finite-sample behavior or provide more accurate approximations of the marginal likelihood. For instance, the Consistent Akaike Information Criterion (CAIC, \cite{bozdogan1987model}) modifies the AIC penalty in order to recover asymptotic consistency. Note that CAIC is indeed an extension of AIC, even though its penalty is very similar to that of BIC. The Adjusted Bayesian Information Criterion (ABIC, \cite{sclove1987application}) was introduced from a minimum description length perspective and has shown strong empirical performance in selecting the correct number of factors and latent classes \citep{dziak2020sensitivity,yang2006evaluating}. Other variants, such as HBIC (Haughton BIC, \cite{haughton1988choice}) and IBIC (Information matrix-based BIC, \cite{bollen2012comparison}), can be viewed as refined asymptotic versions of BIC. 
These criteria are obtained by retaining additional terms in the Laplace approximation of the integrated likelihood beyond the standard BIC expansion in Equation~\eqref{eq_laplace}. 
Originally introduced in the context of exponential family models, HBIC provides a theoretically refined approximation of the Bayes factor. Similarly, IBIC incorporates information from the observed Fisher information matrix and was shown by \cite{bollen2012comparison} to improve the accuracy of model recovery over standard BIC in small samples.

The expressions of the considered criteria are recalled below:
\begin{itemize}
    \item $AIC(\mathcal{M}) = \log p_\mathcal{M}(X \mid \hat \theta_\mathcal{M}) - k_\mathcal{M}$, \citep{akaike1974new},
    
    \item $CAIC(\mathcal{M}) = \log p_\mathcal{M}(X \mid \hat \theta_\mathcal{M}) - \dfrac{k_\mathcal{M}}{2}(\log(n)+1)$, \citep{bozdogan1987model},
    
    \item $BIC(\mathcal{M}) = \log p_\mathcal{M}(X \mid \hat \theta_\mathcal{M}) - \dfrac{k_\mathcal{M}}{2}\log(n)$, \citep{schwarz1978estimating},
    
    \item $ABIC(\mathcal{M}) = \log p_\mathcal{M}(X \mid \hat \theta_\mathcal{M}) - \dfrac{k_\mathcal{M}}{2} \log\left(\dfrac{n+2}{24}\right)$, \citep{sclove1987application},
    
    \item $HBIC(\mathcal{M}) = \log p_\mathcal{M}(X \mid \hat \theta_\mathcal{M}) - \dfrac{k_\mathcal{M}}{2} \log\left(\dfrac{n}{2\pi}\right)$, \citep{haughton1988choice},
    
    \item $IBIC(\mathcal{M}) = \log p_\mathcal{M}(X \mid \hat \theta_\mathcal{M}) - \dfrac{k_\mathcal{M}}{2} \log\left(\dfrac{n}{2\pi}\right) -\dfrac{1}{2} \log\left(\det I(\hat{\theta}_\mathcal{M})\right)$, \citep{bollen2012comparison},
\end{itemize}
where $k_\mathcal{M}$ denotes the number of free parameters of model $\mathcal{M}$ and $I(\hat{\theta}_\mathcal{M})$ is the Fisher information matrix evaluated at the maximum likelihood estimator. Note that here $k_{\mathcal{M}}$ equals $27$, $26$, $25$, and $24$ for the complete, indirect, direct, and null models, respectively.

It is important to emphasize that all these criteria are based exclusively on the observed-data likelihood. Consequently, they do not explicitly exploit the latent variable structure underlying the SEM. In contrast, the two approaches proposed in this work, namely ICL and logIL.IS, directly incorporate information from the complete-data representation involving the latent variables.

To compute the proposed ICL criterion, the following prior hyperparameter values are used:
\begin{itemize}
    \item $\alpha_j = 1$ for all $1 \leq j \leq p$,
    \item $\beta_j^2 = 1$ for all $1 \leq j \leq p$,
    \item $\nu_j = 0$ for all $1 \leq j \leq p$,
    \item $\delta_j = 2$ for all $1 \leq j \leq p$,
    \item $\bm{\mu}_h = (0,\dots,0)^\top$ for all $1 \leq h \leq q$,
    \item $\bm{\kappa}_h = \mathrm{diag}(2,\dots,2)$ for all $1 \leq h \leq q$.
\end{itemize}

For the importance sampling approximation involved in the logIL.IS criterion, the number of Monte Carlo samples is set to $R=n$.

\subsection{Results}

\subsubsection{All comparison results for $\sigma_j^2=0.1$}

The following results focus on the case where $\sigma_j^2=0.1$. In Figures \ref{fig:res_Indirect}--\ref{fig:res_null}, the different color shades are used solely to distinguish between model selection methods based on AIC (in purple), those based on BIC (in blue), and the two methods we propose (in orange).

\paragraph{Indirect model.}

\begin{figure}[h!]
    \centering
    
    \begin{subfigure}{0.45\textwidth}
        \centering
        \includegraphics[width=\linewidth]{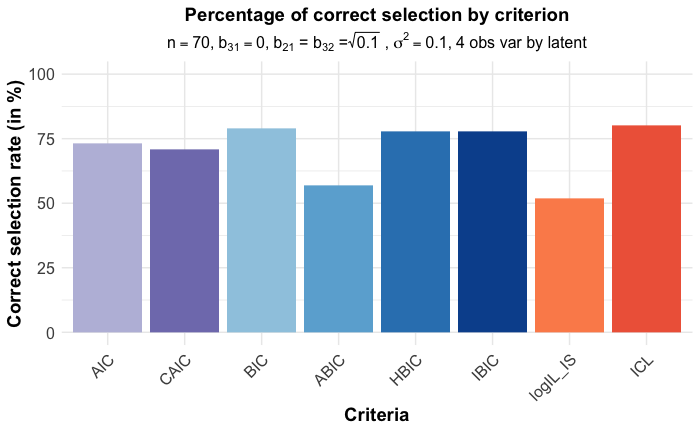}
        \caption{For $n=70$ and $b_{21}=b_{32}=\sqrt{0.1}$.}
    \end{subfigure}
    \hfill
    \begin{subfigure}{0.45\textwidth}
        \centering
        \includegraphics[width=\linewidth]{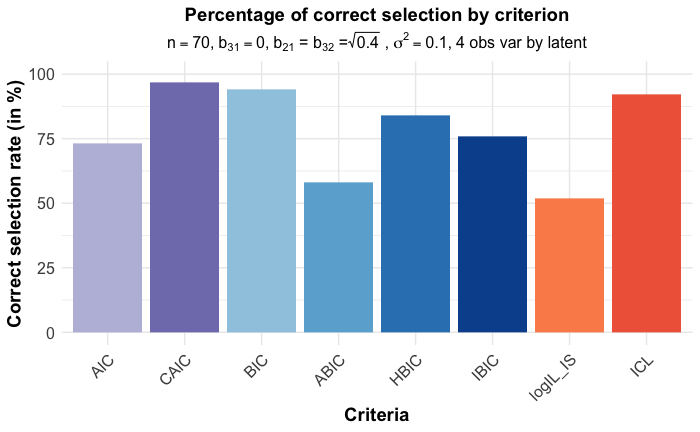}
        \caption{For $n=70$ and $b_{21}=b_{32}=\sqrt{0.4}$.}
    \end{subfigure}
    
    \vspace{0.5cm}
    
    \begin{subfigure}{0.45\textwidth}
        \centering
        \includegraphics[width=\linewidth]{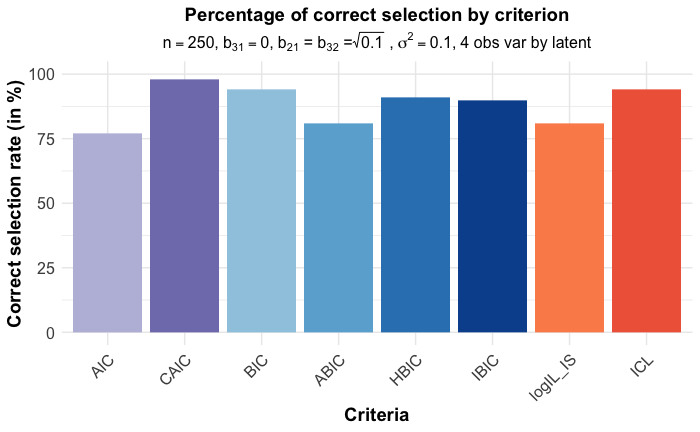}
        \caption{For $n=250$ and $b_{21}=b_{32}=\sqrt{0.1}$.}
        \label{fig:NoMed_c}
    \end{subfigure}
    \hfill
    \begin{subfigure}{0.45\textwidth}
        \centering
        \includegraphics[width=\linewidth]{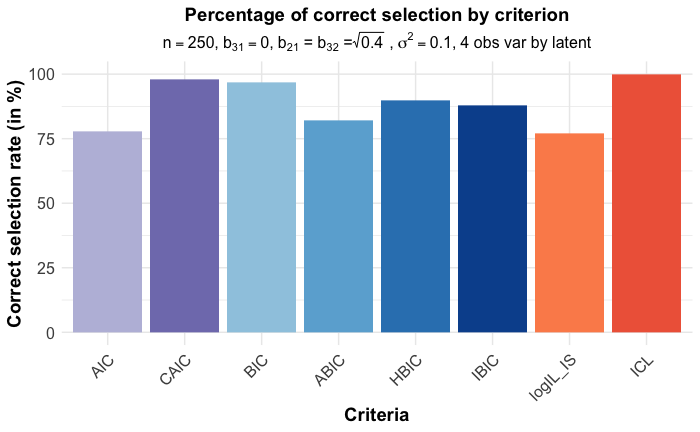}
        \caption{For $n=250$ and $b_{21}=b_{32}=\sqrt{0.4}$.}
    \end{subfigure}

    \vspace{0.5cm}
    
    \begin{subfigure}{0.45\textwidth}
        \centering
        \includegraphics[width=\linewidth]{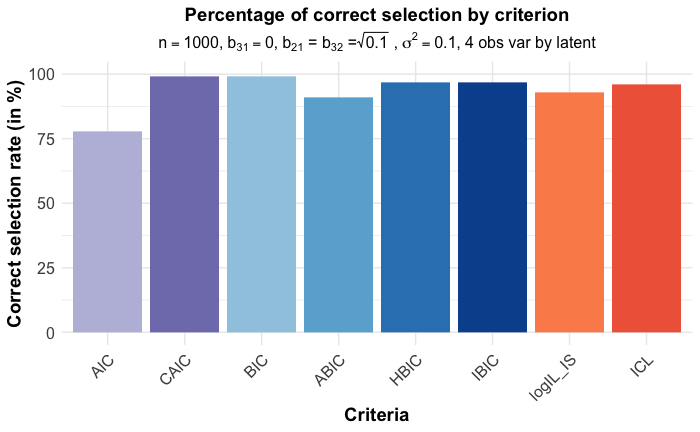}
        \caption{For $n=1000$ and $b_{21}=b_{32}=\sqrt{0.1}$.}
    \end{subfigure}
    \hfill
    \begin{subfigure}{0.45\textwidth}
        \centering
        \includegraphics[width=\linewidth]{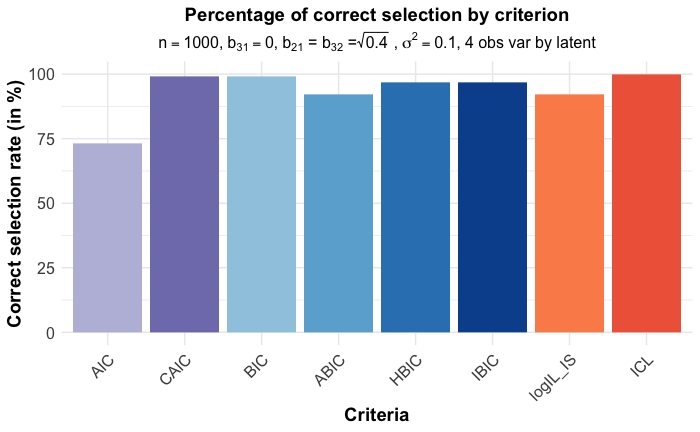}
        \caption{For $n=1000$ and $b_{21}=b_{32}=\sqrt{0.4}$.}
    \end{subfigure}
    
    \caption{Results of correct selection when the true model is the indirect model.}
    \label{fig:res_Indirect}
\end{figure}

Figure~\ref{fig:res_Indirect} reports the percentage of correct model selection when the true data-generating process corresponds to the indirect model. Overall, model recovery improves substantially with the sample size. For $n=70$, the criteria show heterogeneous performance. For $n=1000$, almost all criteria achieve high correct selection rates.

When the indirect effects are strong, CAIC, BIC and ICL provide the highest selection rates, whereas ABIC and logIL.IS perform the worst especially in small samples. When indirect effects are weaker, CAIC and BIC are strongly influenced by the weakness of the signal for small sample sizes, whereas the ICL is less affected and becomes the best-performing criterion under these conditions for small $n$. Indeed, while ICL does not consistently outperform the other criteria, it remains very competitive in all configurations when we want to detect the indirect model.

logIL.IS provides stable performance across both configurations. Although it is not the best-performing criterion in small samples, it remains competitive as the sample size increases, which is not the case of AIC. 

\paragraph{Complete model.}

\begin{figure}[h!]
    \centering
    
    \begin{subfigure}{0.45\textwidth}
        \centering
        \includegraphics[width=\linewidth]{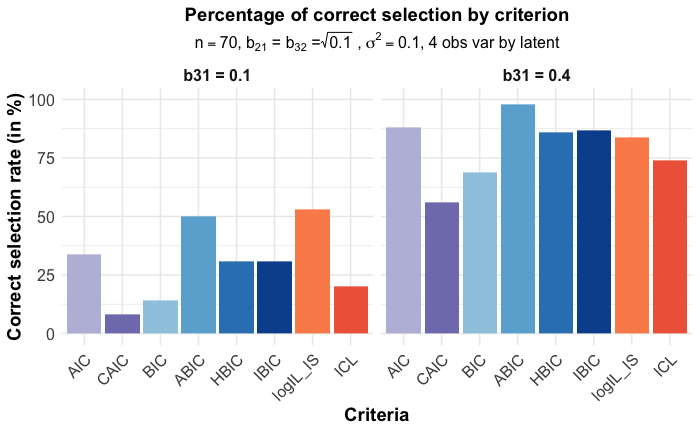}
        \caption{For $n=70$ and $b_{21}=b_{32}=\sqrt{0.1}$.}
    \end{subfigure}
    \hfill
    \begin{subfigure}{0.45\textwidth}
        \centering
        \includegraphics[width=\linewidth]{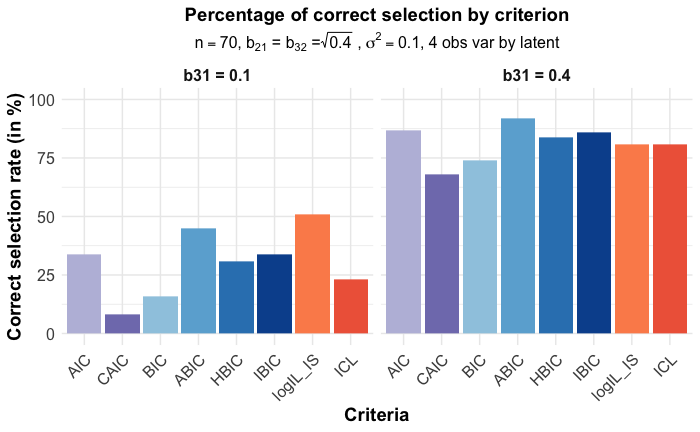}
        \caption{For $n=70$ and $b_{21}=b_{32}=\sqrt{0.4}$.}
    \end{subfigure}
    
    \vspace{0.5cm}
    
    \begin{subfigure}{0.45\textwidth}
        \centering
        \includegraphics[width=\linewidth]{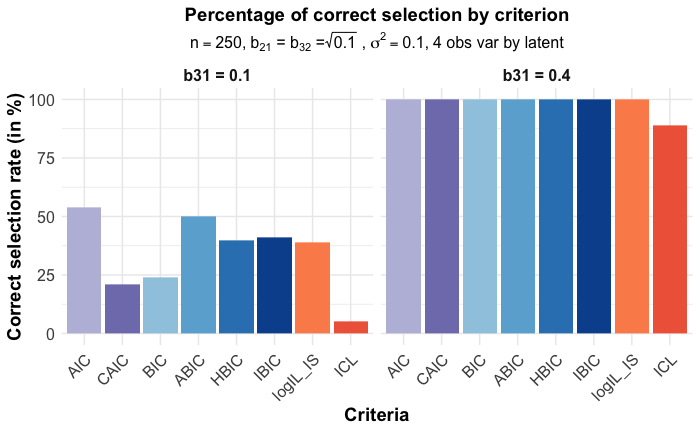}
        \caption{For $n=250$ and $b_{21}=b_{32}=\sqrt{0.1}$.}
    \end{subfigure}
    \hfill
    \begin{subfigure}{0.45\textwidth}
        \centering
        \includegraphics[width=\linewidth]{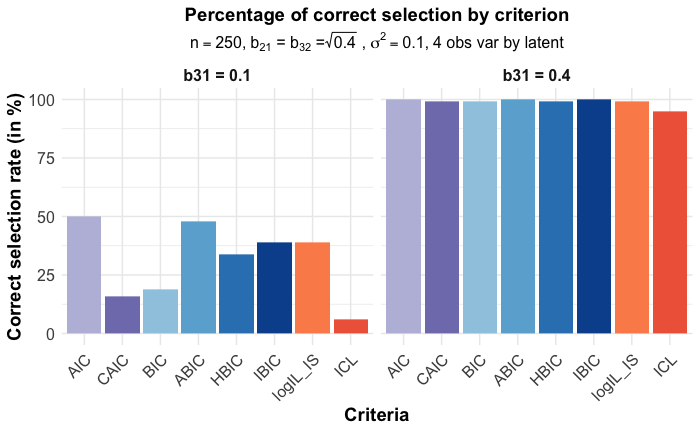}
        \caption{For $n=250$ and $b_{21}=b_{32}=\sqrt{0.4}$.}
    \end{subfigure}

    \vspace{0.5cm}

    \begin{subfigure}{0.45\textwidth}
        \centering
        \includegraphics[width=\linewidth]{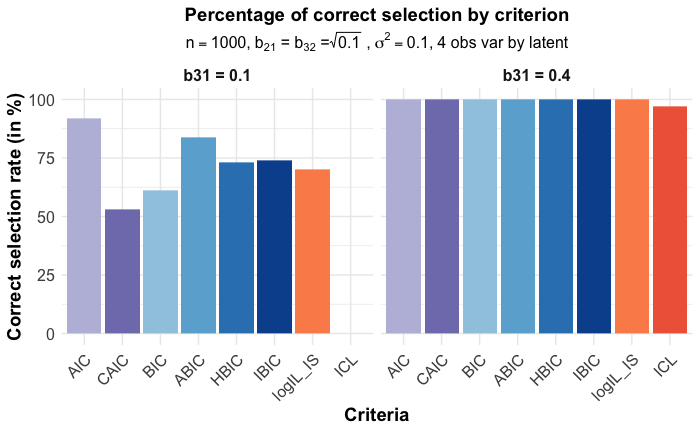}
        \caption{For $n=1000$ and $b_{21}=b_{32}=\sqrt{0.1}$.}
    \end{subfigure}
    \hfill
    \begin{subfigure}{0.45\textwidth}
        \centering
        \includegraphics[width=\linewidth]{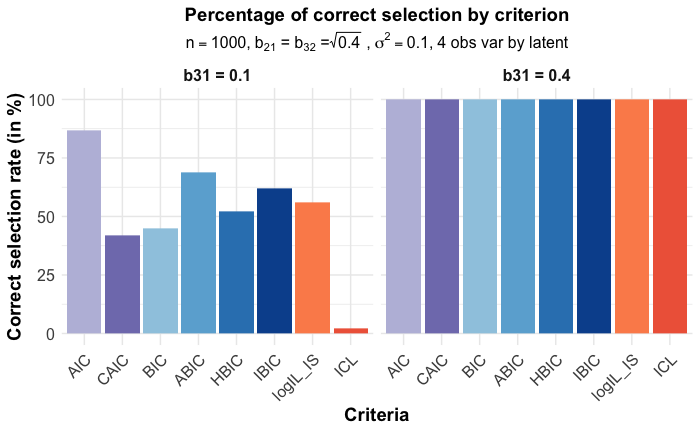}
        \caption{For $n=1000$ and $b_{21}=b_{32}=\sqrt{0.4}$.}
    \end{subfigure}
    
    \caption{Results of correct selection when the true model is the complete model.}
    \label{fig:res_Complete}
\end{figure}

Figure~\ref{fig:res_Complete} shows that recovery of the complete model is mainly driven by the magnitude of the direct effect $b_{31}$. When $b_{31}=0.1$, correct selection rates remain low in small samples, indicating that the complete structure is difficult to identify. In contrast, when $b_{31}=0.4$, all criteria improve substantially, and nearly perfect recovery is obtained for $n=250$ and $n=1000$.

In this setting, logIL.IS, ABIC, AIC, HBIC and IBIC tend to perform better when the direct effect is weak, whereas CAIC, BIC and ICL often underselect the complete model. logIL.IS is the best criterion for identifying the complete model when $n$ is small and $b_{31}$ is small. Moreover, logIL.IS performs better than BIC, suggesting that incorporating the indirect information structure helps reduce underselection. ICL, on the other hand, seems to have some difficulty identifying the complete pattern when $b_{31}$ is small (it selects the indirect model), but becomes competitive when the direct effect is large enough.

\paragraph{Direct model.}

\begin{figure}[h!]
    \centering
    
    \begin{subfigure}{0.45\textwidth}
        \centering
        \includegraphics[width=\linewidth]{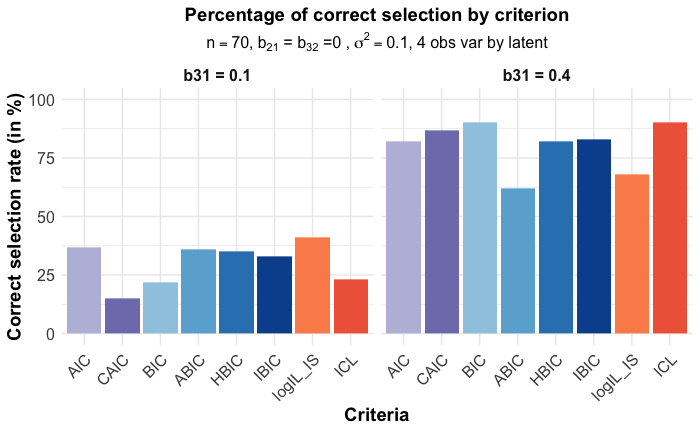}
        \caption{For $n=70$ and $b_{21}=b_{32}=0$.}
    \end{subfigure}
    \hfill
    \begin{subfigure}{0.45\textwidth}
        \centering
        \includegraphics[width=\linewidth]{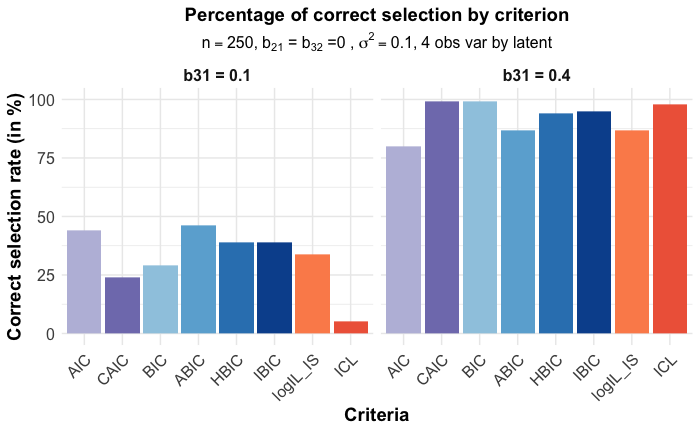}
        \caption{For $n=250$ and $b_{21}=b_{32}=0$.}
    \end{subfigure}

    \begin{subfigure}{0.45\textwidth}
        \centering
        \includegraphics[width=\linewidth]{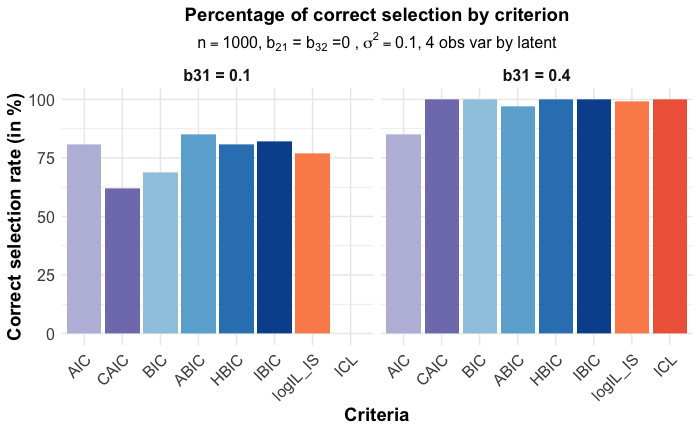}
        \caption{For $n=1000$ and $b_{21}=b_{32}=0$.}
    \end{subfigure}
    
    \caption{Results of correct selection when the true model is the Direct model.}
    \label{fig:res_Direct}
\end{figure}

Figure~\ref{fig:res_Direct} presents the results obtained when the true model is the direct model. As expected, both larger sample sizes and stronger direct effects improve model recovery. When the direct effect is weak, all criteria have relatively low correct selection rates, especially for $n=70$. When the direct effect is strong, the differences between criteria become much smaller, and almost all methods recover the true model accurately for $n$ large enough.

logIL.IS, AIC, ABIC, HBIC and IBIC are generally more effective in weak-signal situations, whereas BIC, ICL and CAIC become highly competitive when the direct effect increases. 

\paragraph{Null model.}

\begin{figure}[h!]
    \centering
    
    \begin{subfigure}{0.45\textwidth}
        \centering
        \includegraphics[width=\linewidth]{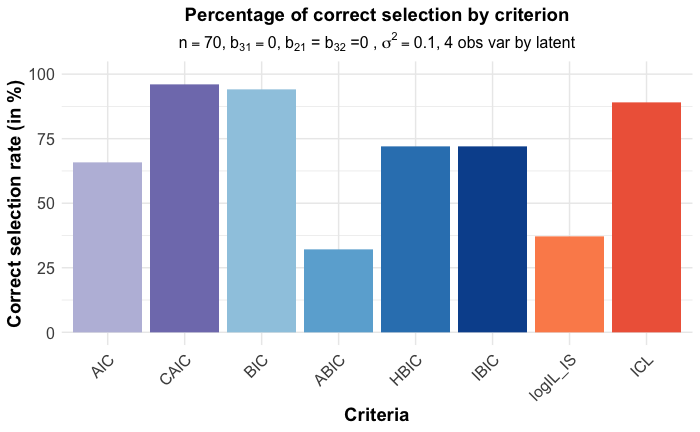}
        \caption{For $n=70$ and $b_{21}=b_{32}=0$.}
    \end{subfigure}
    \hfill
    \begin{subfigure}{0.45\textwidth}
        \centering
        \includegraphics[width=\linewidth]{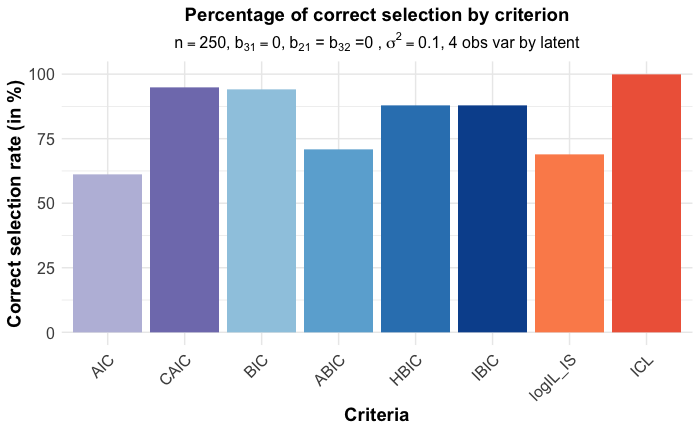}
        \caption{For $n=250$ and $b_{21}=b_{32}=0$.}
    \end{subfigure}

    \begin{subfigure}{0.45\textwidth}
        \centering
        \includegraphics[width=\linewidth]{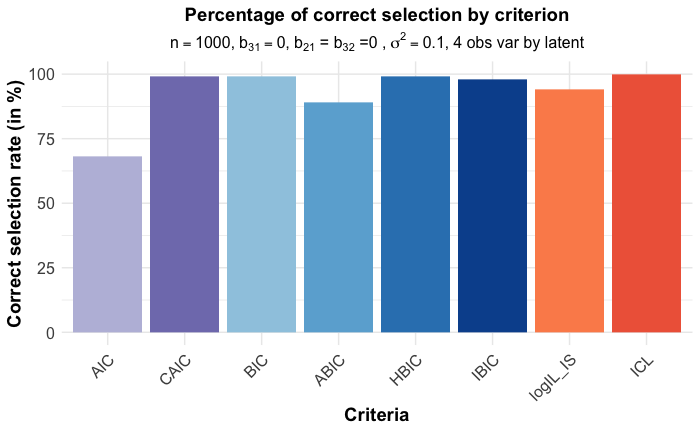}
        \caption{For $n=1000$ and $b_{21}=b_{32}=0$.}
    \end{subfigure}
    
    \caption{Results of correct selection when the true model is the null model.}
    \label{fig:res_null}
\end{figure}

Figure~\ref{fig:res_null} reports the results obtained when the true data-generating process corresponds to the null model. In this case, conservative criteria are favored, since the true model contains no structural effect. CAIC, BIC and ICL achieve the highest correct selection rates, reflecting their ability to avoid selecting unnecessary relationships.

By contrast, ABIC and logIL.IS tend to favor more complex models when the sample size is small, but their performance increases when $n$ is large enough, while the AIC performance stays low.

\paragraph{Overall comparison of the criteria.}
Across all simulation scenarios, model recovery is primarily driven by the sample size and the strength of the structural effects, a conclusion already discussed in \cite{lin_selecting_2017} regarding AIC and BIC-type criteria. Larger samples systematically improve correct selection rates, while stronger direct or indirect effects make the true model easier to identify. Nevertheless, the relative performance of the information criteria depends on the underlying model structure, confirming that no single criterion is uniformly optimal.

The classical criteria exhibit the expected behavior. AIC and ABIC are generally the best when the true model contains weak but nonzero effects, reflecting their greater sensitivity to small structural relationships. However, this flexibility also leads to a tendency to overselect complex models, particularly when the true model is null. Conversely, BIC and CAIC favor more parsimonious structures and perform remarkably well for the null and indirect models, but they may underselect complete models when the direct effect is weak.

The proposed logIL.IS criterion exhibits a behavior that is generally closer to AIC and ABIC than to BIC. It performs particularly well for identifying complete and direct models, where accounting for the latent structural information appears to improve the recovery of weak relationships. At the same time, logIL.IS outperforms the classical BIC when there is a direct effect but the relationships are weak, suggesting that incorporating the latent information structure into the penalty effectively reduces the tendency of BIC to underselect such models. Although logIL.IS is not systematically the best-performing criterion, it remains competitive across all scenarios and provides a good balance between sensitivity to weak effects and robustness across different latent structures. However, this gain comes at the price of increased computational cost. The importance sampling required to estimate the integrated likelihood is significantly more computationally expensive than the closed-form expressions underlying classical criteria such as AIC, BIC or ICL. This additional computational cost must therefore be taken into account when applying this criterion to large-scale problems. Note that the accuracy of the approximation depends on the number of samples $R$: increasing $R$ generally improves the precision of the estimated criterion, but also results in a proportional increase in computational time.

The proposed ICL criterion also shows promising performance under the present simulation settings. It is particularly effective for identifying the correct model when the direct effect is zero or large enough, where the integrated complete-data likelihood appears to provide useful information for discriminating between competing latent structures. In several configurations, ICL achieves the highest correct selection rates, especially when the sample size or the signal strength is sufficiently large. However, its performance deteriorates with weak structural direct effects, where it tends to favor simpler models. 

\subsubsection{Comparison of ICL and its oracle as residual variance increases}

Tables~\ref{tab:ICL70_03}--\ref{tab:ICL1000_03} summarize the model selection results obtained with the proposed ICL criterion when the measurement error variance is relatively large ($\sigma_j^2=0.3$). For each simulation configuration, the table reports the proportions (out of 100 simulated datasets) for which each candidate model (complete, indirect, direct, or null) is selected. The proportions in each row therefore sum to one. The shaded cells indicate the true data-generating model (so we want to maximize the value in these cells). Two versions of the criterion are compared. The first corresponds to the practical implementation, where the latent variables are replaced by their posterior estimates. The second is an oracle version in which the true latent variables are used when evaluating the integrated complete-data likelihood. Comparing these two versions allows us to isolate the impact of latent variable estimation on model selection.

For example, in the last row of Table~\ref{tab:ICL70_03}, where the true model is complete ($b_{21}=b_{32}=\sqrt{0.4}$ and $b_{31}=0.4$), the practical ICL correctly identifies the complete model in 59\% of the simulations, compared with 86\% for the oracle version. In the remaining cases, both criteria almost exclusively select the indirect model, indicating that the estimation error mainly affects the detection of the direct effect rather than the indirect structure.

\renewcommand{\arraystretch}{1.5}
\begin{table}[h]
\centering
\begin{tabular}{|c|c|cc|cc|cc|cc|}
\hline
& &
\multicolumn{2}{c|}{\textbf{Complete}} &
\multicolumn{2}{c|}{\textbf{Indirect}} &
\multicolumn{2}{c|}{\textbf{Direct}} &
\multicolumn{2}{c|}{\textbf{Null}}\\
\cline{3-10}
& & ICL & Oracle & ICL & Oracle & ICL & Oracle & ICL & Oracle \\
\hline

\multirow{3}{*}{$b_{21}=b_{32}=0$}
& $b_{31}=0$   & 0 & 0.01& 0 & 0.05 & 0.01 & 0.11 & \cellcolor{gray!20}0.99 & \cellcolor{gray!20}0.83\\
\hhline{|~|---------|}
& $b_{31}=0.1$ & 0 & 0 & 0 & 0.07 & \cellcolor{gray!20}0.06 & \cellcolor{gray!20}0.26& 0.94 & 0.67\\
\hhline{|~|---------|}
& $b_{31}=0.4$ & 0 & 0.06& 0 & 0.01 & \cellcolor{gray!20}0.65 & \cellcolor{gray!20}0.88& 0.35 & 0.05\\
\hline

\multirow{3}{*}{$b_{21}=b_{32}=\sqrt{0.1}$}
& $b_{31}=0$   & 0 & 0.08& \cellcolor{gray!20}0.35 & \cellcolor{gray!20}0.80& 0.03 & 0.02& 0.62 & 0.10\\
\hhline{|~|---------|}
& $b_{31}=0.1$ & \cellcolor{gray!20}0.02 & \cellcolor{gray!20}0.18& 0.36 & 0.71& 0.10 & 0.05& 0.52 & 0.06\\
\hhline{|~|---------|}
& $b_{31}=0.4$ & \cellcolor{gray!20}0.18 & \cellcolor{gray!20}0.79& 0.18 & 0.10 & 0.55 & 0.11 & 0.09 & 0 \\
\hline

\multirow{3}{*}{$b_{21}=b_{32}=\sqrt{0.4}$}
& $b_{31}=0$   & 0.03 & 0.11& \cellcolor{gray!20}0.97 & \cellcolor{gray!20}0.89& 0 & 0 & 0 & 0\\
\hhline{|~|---------|}
& $b_{31}=0.1$ & \cellcolor{gray!20}0.09 & \cellcolor{gray!20}0.27& 0.91 & 0.73 & 0 & 0& 0 & 0\\
\hhline{|~|---------|}
& $b_{31}=0.4$ & \cellcolor{gray!20}0.59 & \cellcolor{gray!20}0.86& 0.41 & 0.14 & 0 & 0 & 0 & 0\\
\hline

\end{tabular}
\caption{ICL and oracle results for $n=70$ and $\sigma_j^2=0.3$. The shaded cells indicate the true data-generating model that we want to select.}
\label{tab:ICL70_03}
\end{table}

\begin{table}[h]
\centering
\begin{tabular}{|c|c|cc|cc|cc|cc|}
\hline
& &
\multicolumn{2}{c|}{\textbf{Complete}} &
\multicolumn{2}{c|}{\textbf{Indirect}} &
\multicolumn{2}{c|}{\textbf{Direct}} &
\multicolumn{2}{c|}{\textbf{Null}}\\
\cline{3-10}
& & ICL & Oracle & ICL & Oracle & ICL & Oracle & ICL & Oracle\\
\hline

\multirow{3}{*}{$b_{21}=b_{32}=0$}
& $b_{31}=0$   & 0 & 0.01 & 0 & 0& 0 & 0.06 & \cellcolor{gray!20}1 & \cellcolor{gray!20}0.93\\
\hhline{|~|---------|}
& $b_{31}=0.1$ & 0& 0 & 0& 0.02 & \cellcolor{gray!20}0 & \cellcolor{gray!20}0.34& 1 & 0.64\\
\hhline{|~|---------|}
& $b_{31}=0.4$ & 0& 0.01 & 0& 0 & \cellcolor{gray!20}0.21 & \cellcolor{gray!20}0.99 & 0.79 & 0\\
\hline

\multirow{3}{*}{$b_{21}=b_{32}=\sqrt{0.1}$}
& $b_{31}=0$   & 0 & 0.09 & \cellcolor{gray!20}0.05 & \cellcolor{gray!20}0.91& 0 & 0 & 0.95 & 0\\
\hhline{|~|---------|}
& $b_{31}=0.1$ & \cellcolor{gray!20}0& \cellcolor{gray!20}0.30& 0.05 & 0.70 & 0 & 0& 0.95 & 0\\
\hhline{|~|---------|}
& $b_{31}=0.4$ & \cellcolor{gray!20}0.03 & \cellcolor{gray!20}1& 0.02 & 0& 0.68& 0 & 0.27 & 0\\
\hline

\multirow{3}{*}{$b_{21}=b_{32}=\sqrt{0.4}$}
& $b_{31}=0$   & 0 & 0.08 & \cellcolor{gray!20}1 & \cellcolor{gray!20}0.92 & 0 & 0& 0 & 0\\
\hhline{|~|---------|}
& $b_{31}=0.1$ & \cellcolor{gray!20}0.02 & \cellcolor{gray!20}0.29& 0.98 & 0.71 & 0 & 0& 0 & 0\\
\hhline{|~|---------|}
& $b_{31}=0.4$ & \cellcolor{gray!20}0.94 & \cellcolor{gray!20}0.99& 0.06 & 0.01& 0 & 0 &0 & 0 \\
\hline

\end{tabular}
\caption{ICL and oracle results for $n=250$ and $\sigma_j^2=0.3$. The shaded cells indicate the true data-generating model that we want to select.}
\label{tab:ICL250_03}
\end{table}

\begin{table}[h]
\centering
\begin{tabular}{|c|c|cc|cc|cc|cc|}
\hline
& &
\multicolumn{2}{c|}{\textbf{Complete}} &
\multicolumn{2}{c|}{\textbf{Indirect}} &
\multicolumn{2}{c|}{\textbf{Direct}} &
\multicolumn{2}{c|}{\textbf{Null}}\\
\cline{3-10}
& & ICL & Oracle & ICL & Oracle & ICL & Oracle & ICL & Oracle\\
\hline

\multirow{3}{*}{$b_{21}=b_{32}=0$}
& $b_{31}=0$   & 0 & 0 & 0 & 0& 0 & 0.01& \cellcolor{gray!20}1 & \cellcolor{gray!20}0.99\\
\hhline{|~|---------|}
& $b_{31}=0.1$ & 0 & 0 & 0& 0& \cellcolor{gray!20}0 & \cellcolor{gray!20}0.75& 1 & 0.25\\
\hhline{|~|---------|}
& $b_{31}=0.4$ & 0 & 0& 0& 0 & \cellcolor{gray!20}0 & \cellcolor{gray!20}1& 1 & 0\\
\hline

\multirow{3}{*}{$b_{21}=b_{32}=\sqrt{0.1}$}
& $b_{31}=0$   & 0 & 0.01 & \cellcolor{gray!20}0 & \cellcolor{gray!20}0.99& 0 & 0& 1 & 0\\
\hhline{|~|---------|}
& $b_{31}=0.1$ & \cellcolor{gray!20}0 & \cellcolor{gray!20}0.65& 0 & 0.35 & 0 & 0& 1 & 0\\
\hhline{|~|---------|}
& $b_{31}=0.4$ & \cellcolor{gray!20}0 & \cellcolor{gray!20}1& 0& 0 & 0 & 0 & 1 & 0\\
\hline

\multirow{3}{*}{$b_{21}=b_{32}=\sqrt{0.4}$}
& $b_{31}=0$   & 0 & 0.01& \cellcolor{gray!20}0 & \cellcolor{gray!20}0.99& 0 & 0& 1 & 0\\
\hhline{|~|---------|}
& $b_{31}=0.1$ & \cellcolor{gray!20}0& \cellcolor{gray!20}0.56 & 0 & 0.44 & 0 & 0& 1 & 0\\
\hhline{|~|---------|}
& $b_{31}=0.4$ & \cellcolor{gray!20}0 & \cellcolor{gray!20}1& 0 & 0& 0.16 & 0&0.84  & 0\\
\hline

\end{tabular}
\caption{ICL and oracle results for $n=1000$ and $\sigma_j^2=0.3$. The shaded cells indicate the true data-generating model that we want to select.}
\label{tab:ICL1000_03}
\end{table}

When the latent variables are estimated, the proposed ICL criterion performs poorly even for $n=1000$, indicating the problem is not finite‑sample variability. For the null model, it systematically selects the correct structure. However, for the direct, indirect and complete models, the criterion exhibits a strong tendency to select overly simple models. In particular, indirect models are frequently classified as null models, while complete models are often confused with direct or null structures depending on the signal strength. Since this problem is observed when the sample size is large, it suggests that these poor results cannot be explained solely by the variability associated with a finite sample.

The results are very different when we use the true latent variables rather than their estimates. In this oracle setting, the criterion correctly identifies all the model structures with high accuracy as soon as the direct effect becomes sufficiently strong. For $n=1000$, correct selection rates reach nearly perfect recovery for several configurations. These results indicate that the ICL criterion itself is capable of discriminating between the competing structural models when the latent variables are known.

The comparison between the two versions of the criterion strongly suggests that the poor empirical performance of ICL in this condition is mainly due to the estimation of the latent variables rather than to the criterion itself. Indeed, replacing the estimated latent variables by their true values almost completely removes the systematic underselection observed with the practical implementation. This finding highlights the impact of the uncertainty associated with latent variable estimation on the ICL criterion. Unlike mixture models, where the latent allocation variables are discrete and can often be estimated with high confidence, latent variables in SEM are continuous and estimated with non-negligible uncertainty, which propagates directly into the criterion.

Overall, these experiments show that the proposed ICL is effective provided there is sufficient information to correctly estimate the latent variables.

\section{Conclusion}
\label{sec_conclusion}

This paper revisited the problem of model selection in Gaussian structural equation models from the perspective of complete-data integrated likelihood. Unlike most previous studies, we considered both scenarios with and without a signal, which made it possible to evaluate not only the ability of information criteria to capture existing latent relationships but also their ability to avoid selecting superfluous relationships. This broader evaluation provides a more comprehensive analysis of model selection performance in structural equation modeling.

Two complementary criteria exploiting the latent structure were investigated. The first, logIL.IS, incorporates information on the latent structural dependencies through an importance sampling approximation of the integrated observed-data likelihood. The second adapts the Integrated Completed Likelihood (ICL) criterion to Gaussian SEM by directly exploiting the complete-data likelihood.

The simulation study shows that explicitly accounting for the latent structure can indeed improve model selection, although the benefits depend on the way this information is incorporated. logIL.IS consistently provides competitive performance across the different model configurations considered. It generally improves on the classical BIC when latent dependency structures become more complex, while remaining robust in simpler situations where avoiding overfitting is essential. These results suggest that adapting the complexity penalty to the latent dependency structure is an effective strategy to improve model selection in SEM.

The proposed ICL criterion follows a different philosophy by relying directly on the integrated complete-data likelihood. The numerical experiments indicate that, while more sensitive to the intensity of the direct signal, ICL is particularly effective for recovering complex latent structures when the latent variables are estimated with sufficient accuracy. However, its performance deteriorates when the measurement model provides only limited information about latent variables, highlighting the crucial role that latent variable estimation plays in the practical implementation of complete-data criteria. This finding suggests that the main limitation lies less in the complete-data formulation itself than in the uncertainty associated with estimating the latent variables.

Overall, these results demonstrate the potential of information criteria specifically designed for latent variable models, rather than those directly adapted to contexts where all variables are observed. They also highlight that no single criterion is universally optimal in all situations. Rather, the choice of criterion must take into account the size of the available sample, the expected complexity of the latent structure, and the quality of its measurement.

Several directions for future work naturally emerge from this study. On the methodological side, improving complete-data criteria by explicitly accounting for the uncertainty of latent variable estimation appears particularly promising. More generally, extending the proposed approaches to non-Gaussian latent variables or misspecified measurement models would provide a broader assessment of their practical usefulness. A final direction concerns the specification of the measurement model, and in particular the choice of the number of observed variables associated with each latent variable. While this choice is often based on domain-specific knowledge, it also has significant statistical implications, as it directly affects the identifiability, estimation accuracy, and interpretability of the latent variables. Developing data-driven procedures to select the number of observed variables while preserving the validity of the resulting estimates therefore constitutes a major methodological challenge.

\section*{Acknowledgements}
This work received government funding managed by the Agence Nationale de la Recherche under the France 2030 program as part of the Agroecology and Digital research program, reference number ANR-22-PEAE-0015.

\bibliography{bibliography}
\bibliographystyle{apalike}

\newpage
\begin{appendices}

\section{Proof of Proposition \ref{propICL}}
\label{app_proof}

\begin{proof}[Proof of Proposition \ref{propICL}]
    Note that
\begin{align*}
    p(X,Z \mid \vartheta,\bm{f}, \bm{\omega})
    =& \int p(X,Z \mid \theta; \bm{f}, \bm{\omega}) p(\theta \mid \vartheta;\bm{f}, \bm{\omega}) d\theta \\
=&\int p(X \mid Z, \Lambda, \Sigma) p(\Lambda, \Sigma \mid \vartheta, \bm{\omega}) d\Lambda d\Sigma \times \int p(Z \mid B,\bm{f}) p(B \mid \vartheta,\bm{f}) dB \\
=& J_1(X \mid Z, \bm{\omega}) J_2(Z \mid \bm{f})
\end{align*} 
where $J_1(X \mid Z, \bm{\omega}) = \int p(X \mid Z, \Lambda, \Sigma) p(\Lambda, \Sigma \mid \vartheta, \bm{\omega}) d\Lambda d\Sigma$ and $J_2(Z \mid \bm{f}) = \int p(Z \mid B,\bm{f}) p(B \mid \vartheta,\bm{f}) dB$.

\begin{align*}
    J_1(X \mid Z, \bm{\omega}) =& \int p(X \mid Z, \Lambda, \Sigma) p(\Lambda, \Sigma \mid \vartheta, \bm{\omega}) d\Lambda d\Sigma \\
    =& \int \prod_{i=1}^n
\phi_p(X_i; \Lambda Z_i, \Sigma) \times \prod_{j=1}^p
\Bigg(
g(\sigma_j^2; \alpha_j/2, \beta_j^2/2)
\prod_{h=1}^q
\left[
\phi_1(\Lambda_{jh}; \nu_j, \sigma_j^2 \delta_j^{-1})
\right]^{\omega_{jh}}
\Bigg) d\Lambda d\Sigma \\
=& \int \prod_{j=1}^p
\Bigg(
g(\sigma_j^2; \alpha_j/2, \beta_j^2/2)
\prod_{h=1}^q
\left[
\phi_1(\Lambda_{jh}; \nu_j, \sigma_j^2 \delta_j^{-1}) \prod_{i=1}^n
\phi_1(x_{ij}; \Lambda_{jh} z_{ih}, \sigma_j^2 )
\right]^{\omega_{jh}}
\Bigg) d\Lambda d\Sigma \\
=& \prod_{j=1}^p \int
\Bigg(
g(\sigma_j^2; \alpha_j/2, \beta_j^2/2)\phi_1(\lambda_{j}; \nu_j, \sigma_j^2 \delta_j^{-1})
 \prod_{i=1}^n
\phi_1(x_{ij}; \lambda_j \sum_{h=1}^q \omega_{jh} z_{ih}, \sigma_j^2 ) 
\Bigg) dv_j,
\end{align*}
where $v_j=(\sigma_j^2, \Lambda_{j1}, \dots, \Lambda_{jq})$ and $\lambda_j$ is the unique non-zero element in row $j$ of $\Lambda$.

To calculate each of the $p$ integrals of $J_1$, we need the following lemma.

\begin{lemma}
\label{lemma}
Let $v = (v_1^\top, \dots, v_n^\top)^\top$ be $n$ independent realizations, 
where $v_i = (v_{i1}, \dots, v_{id})^\top \in \mathbb{R}^d$, generated 
given $u = (u_1^\top, \dots, u_n^\top)$, where 
$u_i = (u_{i1}, \dots, u_{ir})^\top \in \mathbb{R}^r$ by a Gaussian regression model such that the components of vector $v_i$ are conditionally independent given $u_i$ and such that
\[
v_{ij} \mid u_i \sim \mathcal{N}(\lambda_j^\top u_i, \sigma_j^2).
\]

Consider that the prior on $\sigma^2 = (\sigma_1^2, \dots, \sigma_d^2)^\top$ is a product of univariate priors such that
\[
\sigma_j^2 \sim \text{IG}(\alpha_j/2, \beta_j^2/2).
\]

In addition, assume that the prior on $\lambda = (\lambda_1^\top, \dots, \lambda_d^\top)^\top$ given $\sigma$ is a product prior distribution on $\lambda_j$ given $\sigma_j^2$, where
\[
\lambda_j \mid \sigma_j^2 \sim \mathcal{N}_r(\nu_j, \sigma_j^2 \Delta_j^{-1}),
\]
with $\Delta_j = \operatorname{diag}(\delta_1, \dots, \delta_r)$.

Then,
\[
I_n(v \mid u, \vartheta)
=
\int p(v, \sigma^2, \lambda \mid u, \vartheta)\, d\sigma^2 \, d\lambda
\]
is given by
\[
I_n(v \mid u, \vartheta)
=
\prod_{j=1}^d
\frac{\det^{1/2}(\Delta_j)}
{\det^{1/2}(S_j)}
\frac{1}{\pi^{n/2}}
\frac{\Gamma\left(n/2 + \alpha_j/2\right)}
{\Gamma\left(\frac{\alpha_j}{2}\right)}
\frac{\beta_j^{\alpha_j}}
{\left(\beta_j^2 + t_j\right)^{\alpha_j/2 + n/2}},
\]
where
\[t_j=\sum_{i=1}^n v_{ij}^2+\nu_j^\top \Delta_j \nu_j-m_j^\top S_j m_j, \quad S_j=\Delta_j+\sum_{i=1}^n u_i u_i^\top, \quad m_j=S_j^{-1}\left(\sum_{i=1}^n u_i v_{ij}+\Delta_j \nu_j\right).\]
\end{lemma}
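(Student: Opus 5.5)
The plan is a standard conjugate Normal--Inverse-Gamma computation, carried out separately for each coordinate $j$. The factorization $\sigma^2$ and $\lambda$ priors and the conditional independence of the $v_{ij}$ given $u_i$ make the joint density $p(v,\sigma^2,\lambda\mid u,\vartheta)$ a product over $j=1,\dots,d$ of factors involving only $(\sigma_j^2,\lambda_j)$. Hence $I_n(v\mid u,\vartheta)=\prod_{j=1}^d I^{(j)}$, with
\[
I^{(j)}=\int g(\sigma_j^2;\alpha_j/2,\beta_j^2/2)\,\phi_r(\lambda_j;\nu_j,\sigma_j^2\Delta_j^{-1})\prod_{i=1}^n\phi_1(v_{ij};\lambda_j^\top u_i,\sigma_j^2)\,d\lambda_j\,d\sigma_j^2 .
\]
I will first integrate out $\lambda_j$ at fixed $\sigma_j^2$, and then integrate out $\sigma_j^2$.

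\emph{Integrating out $\lambda_j$.} Collect the exponents of the Gaussian prior and the likelihood:
\[
-\frac{1}{2\sigma_j^2}\Big[\sum_i v_{ij}^2-2\lambda_j^\top\Big(\sum_i u_iv_{ij}+\Delta_j\nu_j\Big)+\lambda_j^\top S_j\lambda_j+\nu_j^\top\Delta_j\nu_j\Big],
\]
with $S_j=\Delta_j+\sum_i u_iu_i^\top$. Completing the square around $m_j=S_j^{-1}(\sum_i u_iv_{ij}+\Delta_j\nu_j)$ turns the bracket into $(\lambda_j-m_j)^\top S_j(\lambda_j-m_j)+t_j$, with $t_j$ exactly as in the statement. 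The Gaussian integral in $\lambda_j$ gives $(2\pi\sigma_j^2)^{r/2}\det^{-1/2}(S_j)$. This cancels against the prior normalizer $(2\pi\sigma_j^2)^{-r/2}\det^{1/2}(\Delta_j)$, leaving
\[
\frac{\det^{1/2}(\Delta_j)}{\det^{1/2}(S_j)}\,(2\pi\sigma_j^2)^{-n/2}\exp\{-t_j/(2\sigma_j^2)\}.
\]

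\emph{Integrating out $\sigma_j^2$.} Multiply by the inverse-gamma density
\[
\frac{(\beta_j^2/2)^{\alpha_j/2}}{\Gamma(\alpha_j/2)}\,(\sigma_j^2)^{-\alpha_j/2-1}e^{-\beta_j^2/(2\sigma_j^2)} .
\]
The remaining integrand is the kernel of an $\mathrm{IG}\big((n+\alpha_j)/2,(\beta_j^2+t_j)/2\big)$ density. Its integral is
\[
\Gamma\big((n+\alpha_j)/2\big)\big((\beta_j^2+t_j)/2\big)^{-(n+\alpha_j)/2}.
\]
Collecting constants, the powers of $2$ cancel: $2^{-n/2}\,2^{-\alpha_j/2}\,2^{(n+\alpha_j)/2}=1$. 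What remains is $\pi^{-n/2}$, $\beta_j^{\alpha_j}$, the Gamma ratio and $(\beta_j^2+t_j)^{-(n+\alpha_j)/2}$, which is the claimed factor.

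The only step needing care is the algebra of completing the square, in particular checking that $t_j\ge 0$ so the inverse-gamma integral converges. This holds because $t_j$ is the minimum over $\lambda_j$ of a sum of squares plus a nonnegative quadratic form. Beyond that, the main obstacle is just bookkeeping the $2\pi$ and $2$ constants correctly.
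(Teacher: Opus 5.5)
Your proposal is correct and follows essentially the same route as the paper: factorize over $j$, complete the square in $\lambda_j$ around $m_j$ with precision $S_j$ and integrate out the Gaussian, then recognize the $\mathrm{IG}\big((n+\alpha_j)/2,(\beta_j^2+t_j)/2\big)$ kernel in $\sigma_j^2$, with the powers of $2$ cancelling exactly as you computed. Your check that $t_j\ge 0$, so that $\beta_j^2+t_j>0$ and the inverse-gamma integral converges, is a point the paper leaves implicit.
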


Then we apply Lemma \ref{lemma} to each of the $p$-integrals that define $J_1$ with $v=X_{.j} \in \mathbb{R}^n$, $u_i=\omega_j^\top Z_i = \sum_{h=1}^q \omega_{jh}z_{ih}$, that is $u = \sum_{h=1}^q \omega_{jh}Z_{.h} \in \mathbb{R}^n$, $d=1$, $r=1$, $\Delta_j=\delta_j$. We have $x_{ij} \mid Z_i \sim \mathcal{N}(\lambda_j^\top \sum_{h=1}^q \omega_{jh}z_{ih}, \sigma_j^2)$, $\sigma_j^2 \sim \text{IG}(\alpha_j/2, \beta_j^2/2)$, $\lambda_j \mid \sigma_j^2 \sim \mathcal{N}_r(\nu_j, \sigma_j^2 \delta_j^{-1})$. So $S_j=\Delta_j+\sum_{i=1}^n u_i u_i^\top=\delta_j+ \omega_j^\top \sum_{i=1}^n Z_i Z_i^\top \omega_j = s_j$, $m_j=S_j^{-1}\left(\sum_{i=1}^n u_i v_{ij}+\Delta_j \nu_j\right) = s_j^{-1} (\sum_{i=1}^n \bm{\omega}_j^\top Z_i x_{ij} + \delta_j \nu_j) = a_j$, and $t_j=\sum_{i=1}^n v_{ij}^2+\nu_j^\top \Delta_j \nu_j-m_j^\top S_j m_j = t_j=\sum_{i=1}^n x_{ij}^2+\delta_j \nu_j^2-s_j a_j^2=t_j$. Therefore,

$J_1(X, Z) = \prod_{j=1}^p \frac{\delta_j^{1/2}}{\pi^{n/2} s_j^{1/2}} \frac{\Gamma(n/2+\alpha_j/2)}{\Gamma(\alpha_j/2)}
\frac{\beta_j^{\alpha_j}}
{(\beta_j^2+t_j)^{\alpha_j/2+n/2}} = \dfrac{1}{\pi^{np/2}} \prod_{j=1}^p \dfrac{\delta_j^{1/2}}{s_j^{1/2}} \dfrac{\Gamma(n/2+\alpha_j/2)}{\Gamma(\alpha_j/2)}
\dfrac{\beta_j^{\alpha_j}}
{(\beta_j^2+t_j)^{\alpha_j/2+n/2}}$.

Now we have to calculate $J_2(Z \mid \bm{f}) = \int p(Z \mid B,\bm{f}) p(B \mid \vartheta,\bm{f}) dB$. We have 
\[ p(Z \mid B,\bm{f}) = \prod_{i=1}^n p(Z_i \mid B,\bm{f}), \quad p(Z_i \mid B,\bm{f}) \sim \mathcal{N}_q(0, \Psi), \]
where $\Psi = (I_q - B)^{-1} \left[(I_q - B)^{-1}\right]^\top$ because $\Gamma=Id$. 
\[ p(Z \mid B,\bm{f}) = \dfrac{1}{(2\pi)^{nq/2} \det^{n/2}(\Psi)} \exp \left\{ - \frac{1}{2} \sum_{i=1}^n Z_i^\top \Psi^{-1} Z_i \right\}  .\]
We have $\det(\Psi)=\det^{-2}(I_q-B)=1$ since $B$ is a strict triangular matrix, therefore with zeros on the diagonal. Now for the computation of the sum

\begin{align*}
    \sum_{i=1}^n Z_i^\top \Psi^{-1} Z_i =& \sum_{i=1}^n Z_i^\top (I_q - B)^\top (I_q - B) Z_i = \sum_{i=1}^n [(I_q - B)Z_i]^\top (I_q - B) Z_i \\
    =& \sum_{i=1}^n \sum_{h=1}^q ((I_q - B)Z_i)_h^2 \\
    =& \sum_{i=1}^n \sum_{h=1}^q \left(z_{ih}-\sum_{\{\ell : f_{h\ell}=1\}} B_{h\ell} z_{i\ell} \right)^2\\
    =& \sum_{i=1}^n \sum_{h=1}^q \left( z_{ih}^2 - 2 \sum_{\{\ell : f_{h\ell}=1\}} B_{h\ell} z_{i\ell} z_{ih} + \sum_{\{\ell : f_{h\ell}=1\}} \sum_{\{k : f_{hk}=1\}} B_{h\ell} z_{i\ell} z_{ik} B_{hk} \right) \\
    =& \sum_{h=1}^q \left( \sum_{i=1}^n z_{ih}^2 - 2 \bm{\tilde{m}}_h^\top \bm{m}_h + \bm{m}_h^\top \tilde{M}_h \bm{m}_h \right)
\end{align*}
where, as a reminder, $\bm{m}_h$ is the $d_h$-dimensional vector composed of the non-zero elements in row $h$ of $B$, $\tilde M_h$ is the symmetric matrix of dimension $d_h \times d_h$ composed of rows and columns of matrix $\sum_{i=1}^n Z_i Z_i^\top$ having a index $\ell$ such that $f_{h\ell}=1$ and $\bm{\tilde m}_h$ is the $d_h$-dimensional vector composed of the elements $\ell$ of
row $h$ of $\sum_{i=1}^n Z_i z_{ih}$ such that $f_{h\ell}=1$. So 
\[\sum_{i=1}^n Z_i^\top \Psi^{-1} Z_i = \sum_{h=1}^q \left[\sum_{i=1}^n z_{ih}^2 - \tilde{\bm{m}}_h^\top \tilde{M}_h^{-1} \tilde{\bm{m}}_h \right]  + \sum_{h=1}^q (\bm{m}_h - \tilde{M}_h^{-1} \tilde{\bm{m}}_h)^\top \tilde{M}_h (\bm{m}_h - \tilde{M}_h^{-1} \tilde{\bm{m}}_h). \] Since we are working conditionally on $\bm{f}$, the first term does not depend on $B$ and can therefore be taken out of the integral.

We have also 
\[p(B \mid \vartheta,\bm{f}) = \prod_{h=1}^q \phi_{d_h}(\bm{m}_h; \bm{\mu}_h, \bm{\kappa}_h^{-1}) = \prod_{h=1}^q \dfrac{1}{(2\pi)^{d_h/2} \det^{1/2}(\bm{\kappa}_h^{-1})} \exp \left\{ -\frac{1}{2}(\bm{m}_h-\bm{\mu}_h)^\top \bm{\kappa}_h (\bm{m}_h-\bm{\mu}_h)\right\}.\]

Therefore 
\begin{align*}
    &J_2(Z \mid \bm{f}) = \int p(Z \mid B,\bm{f}) p(B \mid \vartheta,\bm{f}) dB \\
    =& \int \dfrac{1}{(2\pi)^{nq/2}} \exp \left\{ - \frac{1}{2} \sum_{i=1}^n Z_i^\top \Psi^{-1} Z_i \right\} \prod_{h=1}^q \dfrac{\det^{1/2}(\bm{\kappa}_h)}{(2\pi)^{d_h/2}} \exp \left\{ -\frac{1}{2}(\bm{m}_h-\bm{\mu}_h)^\top \bm{\kappa}_h (\bm{m}_h-\bm{\mu}_h)\right\} dB\\
    =& \dfrac{1}{(2\pi)^{nq/2}} \exp \left\{-\frac{1}{2}\left(\sum_{h=1}^q \sum_{i=1}^n z_{ih}^2 - \sum_{h=1}^q\tilde{\bm{m}}_h^\top \tilde{M}_h^{-1} \tilde{\bm{m}}_h \right)\right\} \times \\
    &\prod_{h=1}^q \left(\dfrac{\det^{1/2}(\bm{\kappa}_h)}{(2\pi)^{d_h/2}} \int \exp \left\{ -\frac{1}{2} \left((\bm{m}_h - \tilde{M}_h^{-1} \tilde{\bm{m}}_h)^\top \tilde{M}_h (\bm{m}_h - \tilde{M}_h^{-1} \tilde{\bm{m}}_h) + (\bm{m}_h-\bm{\mu}_h)^\top \bm{\kappa}_h (\bm{m}_h-\bm{\mu}_h) \right)\right\} d\bm{m}_h \right)^{\mathds{1}_{\{d_h>0\}}}
\end{align*}
since when $d_h=0$, $\bm{m}_h=\emptyset$ and therefore everything inside the product is equal to $1$.

We have
\footnotesize{
\begin{equation*}
    (\bm{m}_h - \tilde{M}_h^{-1} \tilde{\bm{m}}_h)^\top \tilde{M}_h (\bm{m}_h - \tilde{M}_h^{-1} \tilde{\bm{m}}_h) + (\bm{m}_h-\bm{\mu}_h)^\top \bm{\kappa}_h (\bm{m}_h-\bm{\mu}_h) = (\bm{m}_h - \bm{\zeta}_h)^\top \tilde S_h(\bm{m}_h - \bm{\zeta}_h) - \bm{\zeta}_h^\top \tilde S_h \bm{\zeta}_h + \tilde{\bm{m}}_h^\top \tilde{M}_h^{-1} \tilde{\bm{m}}_h + \bm{\mu}_h^\top \bm{\kappa}_h \bm{\mu}_h
\end{equation*}}
\normalsize
with $\tilde S_h = \tilde M_h + \bm{\kappa}_h$, $\bm{\zeta}_h = \tilde S_h^{-1}(\bm{\tilde m}_h +\bm{\kappa}_h \bm{\mu}_h)$. So, 

\begin{align*}
    \int &\exp \left\{ -\frac{1}{2} \left((\bm{m}_h - \tilde{M}_h^{-1} \tilde{\bm{m}}_h)^\top \tilde{M}_h (\bm{m}_h - \tilde{M}_h^{-1} \tilde{\bm{m}}_h) + (\bm{m}_h-\bm{\mu}_h)^\top \bm{\kappa}_h (\bm{m}_h-\bm{\mu}_h) \right)\right\} d\bm{m}_h \\
    =& \int \exp\left\{ - \frac{1}{2} (\bm{m}_h - \bm{\zeta}_h)^\top \tilde S_h(\bm{m}_h - \bm{\zeta}_h) \right\} d\bm{m}_h \exp \left\{ - \frac{1}{2} (-\bm{\zeta}_h^\top \tilde S_h \bm{\zeta}_h + \tilde{\bm{m}}_h^\top \tilde{M}_h^{-1} \tilde{\bm{m}}_h + \bm{\mu}_h^\top \bm{\kappa}_h \bm{\mu}_h)\right\} \\
    =& \text{ det}^{-1/2}(\tilde S_h) (2\pi)^{d_h/2} \exp \left\{ - \frac{1}{2} (-\bm{\zeta}_h^\top \tilde S_h \bm{\zeta}_h + \tilde{\bm{m}}_h^\top \tilde{M}_h^{-1} \tilde{\bm{m}}_h + \bm{\mu}_h^\top \bm{\kappa}_h \bm{\mu}_h)\right\}
\end{align*}
by the normalization constant of the Gaussian law $\mathcal{N}_{d_h}(\bm{\zeta}_h,\tilde S_h^{-1})$.

Therefore,
\small{
\begin{align*}
    J_2(Z \mid \bm{f}) = &\dfrac{1}{(2\pi)^{nq/2}} \exp \left\{-\frac{1}{2}\left(\sum_{h=1}^q \sum_{i=1}^n z_{ih}^2 - \sum_{h=1}^q\tilde{\bm{m}}_h^\top \tilde{M}_h^{-1} \tilde{\bm{m}}_h \right)\right\} \times \\
    & \prod_{h=1}^q \left(\dfrac{\det^{1/2}(\bm{\kappa}_h)}{\det^{1/2}(\tilde S_h)} \exp \left\{ - \frac{1}{2} (-\bm{\zeta}_h^\top \tilde S_h \bm{\zeta}_h + \tilde{\bm{m}}_h^\top \tilde{M}_h^{-1} \tilde{\bm{m}}_h + \bm{\mu}_h^\top \bm{\kappa}_h \bm{\mu}_h)\right\} \right)^{\mathds{1}_{\{d_h>0\}}} \\
    =& \dfrac{1}{(2\pi)^{nq/2}} \prod_{h=1}^q \exp \left\{ -\frac{1}{2} \sum_{i=1}^n z_{ih}^2 \right\}\left(\dfrac{\det^{1/2}(\bm{\kappa}_h)}{\det^{1/2}(\tilde S_h)} \exp \left\{ - \frac{1}{2} (- \tilde{\bm{m}}_h^\top \tilde{M}_h^{-1} \tilde{\bm{m}}_h-\bm{\zeta}_h^\top \tilde S_h \bm{\zeta}_h + \tilde{\bm{m}}_h^\top \tilde{M}_h^{-1} \tilde{\bm{m}}_h + \bm{\mu}_h^\top \bm{\kappa}_h \bm{\mu}_h)\right\} \right)^{\mathds{1}_{\{d_h>0\}}} \\
    =& \dfrac{1}{(2\pi)^{nq/2}} \prod_{h=1}^q \exp \left\{ -\frac{1}{2} \sum_{i=1}^n z_{ih}^2 \right\} \left[\frac{\det^{1/2}(\bm{\kappa}_h)}{\det^{1/2}(\tilde S_h)}\exp\left(-\frac{\tilde t_h}{2}\right)\right]^{\mathds{1}_{\{d_h>0\}}}
\end{align*}}
\normalsize
where $\tilde t_h =  \bm{\mu}_h^\top \bm{\kappa}_h \bm{\mu}_h - \bm{\zeta}_h^\top \tilde S_h \bm{\zeta}_h$.

\color{black}
We conclude that
\begin{align*}
    p(X,Z \mid \vartheta,\bm{f}, \bm{\omega})
=
&\frac{1}{\pi^{(p+q)n/2} 2^{qn/2}}
\prod_{h=1}^q \exp \left\{ -\frac{1}{2} \sum_{i=1}^n z_{ih}^2 \right\}
\left[
\frac{\det^{1/2}(\bm{\kappa}_h)}{\det^{1/2}(\tilde S_h)}
\exp\left(-\frac{\tilde t_h}{2}\right)
\right]^{\mathds{1}_{\{d_h>0\}}} \times \\
&\prod_{j=1}^p
\left[
\frac{\delta_j^{1/2}}{s_j^{1/2}}
\frac{\Gamma(n/2+\alpha_j/2)}{\Gamma(\alpha_j/2)}
\frac{\beta_j^{\alpha_j}}
{(\beta_j^2+t_j)^{n/2+\alpha_j/2}}
\right].
\end{align*}

\end{proof}

\begin{proof}[Proof of Lemma \ref{lemma}]

Let $v = (v_1^\top, \dots, v_n^\top)^\top$, 
where $v_i = (v_{i1}, \dots, v_{id})^\top \in \mathbb{R}^d$, generated 
given $u = (u_1^\top, \dots, u_n^\top)$, where 
$u_i = (u_{i1}, \dots, u_{ir})^\top \in \mathbb{R}^r$ by 
\[
v_{ij} \mid u_i \overset{\text{ind}}{\sim} \mathcal{N}(\lambda_j^\top u_i, \sigma_j^2).
\]
In addition, the priors on $\sigma^2 = (\sigma_1^2, \dots, \sigma_d^2)^\top$ and $\lambda = (\lambda_1^\top, \dots, \lambda_d^\top)^\top$ given $\sigma$ are
\[
\sigma_j^2 \sim \text{IG}(\alpha_j/2, \beta_j^2/2).
\]
\[
\lambda_j \mid \sigma_j^2 \sim \mathcal{N}_r(\nu_j, \sigma_j^2 \Delta_j^{-1}),
\]
with $\Delta_j^{-1} = \operatorname{diag}(\delta_1, \dots, \delta_r)$.

We want to compute 
\[
I_n(v \mid u, \vartheta)
=
\int p(v, \sigma^2, \lambda \mid u, \vartheta)\, d\sigma^2 \, d\lambda,
\]
where $\vartheta$ groups all the hyper-parameters.

We have 
\begin{align*}
    p(v, \sigma^2, \lambda \mid u, \vartheta) =& p(v \mid u, \sigma^2, \lambda, \vartheta) p(\lambda \mid u, \sigma^2, \vartheta) p(\sigma^2 \mid u, \vartheta) \\
    =& \prod_{j=1}^{d} g\left(\sigma_j^2; \alpha_j/2, \beta_j^2/2\right) p(\mathbf{v}_{\bullet j} \mid u, \lambda_j , \sigma_j^2, \vartheta) p(\lambda_j \mid u, \sigma_j^2, \vartheta) \\
    =& \prod_{j=1}^{d} g\left(\sigma_j^2; \alpha_j/2, \beta_j^2/2\right) \dfrac{1}{(2\pi \sigma_j^2)^{n/2}} \exp \left\{-\frac{1}{2\sigma_j^2}\sum_{i=1}^n(v_{ij}-\lambda_j^\top u_i)^2 \right \} \times \\
    &\dfrac{\det^{1/2}(\Delta_j)}{(2\pi \sigma_j^2)^{r/2}} \exp \left\{-\frac{1}{2\sigma_j^2} (\lambda_j - \nu_j)^\top \Delta_j (\lambda_j - \nu_j) \right \} \\
    =& \prod_{j=1}^{d} g\left(\sigma_j^2; \alpha_j/2, \beta_j^2/2\right) \dfrac{\det^{1/2}(\Delta_j)}{(2\pi \sigma_j^2)^{(n+r)/2}} \exp \left\{-\frac{1}{2\sigma_j^2} \left(\sum_{i=1}^n (v_{ij}-\lambda_j^\top u_i)^2 + (\lambda_j - \nu_j)^\top \Delta_j (\lambda_j - \nu_j) \right)\right \}
\end{align*}

First, we analyze the terms within the exponential of the gaussian densities:

\begin{align*}
    \sum_{i=1}^n (v_{ij}-\lambda_j^\top u_i)^2 +& (\lambda_j - \nu_j)^\top \Delta_j (\lambda_j - \nu_j) \\
    =& \sum_{i=1}^n v_{ij}^2 -2 \sum_{i=1}^n v_{ij}\lambda_j^\top u_i + \sum_{i=1}^n \lambda_j^\top u_i u_i^\top \lambda_j + \lambda_j^\top \Delta_j \lambda_j - \lambda_j^\top \Delta_j \nu_j - \nu_j^\top \Delta_j \lambda_j + \nu_j^\top \Delta_j \nu_j \\
    =& \sum_{i=1}^n v_{ij}^2 + \nu_j^\top \Delta_j \nu_j -2 \lambda_j^\top \sum_{i=1}^n v_{ij} u_i + \lambda_j^\top \left( \sum_{i=1}^n u_i u_i^\top \right)\lambda_j + \lambda_j^\top \Delta_j \lambda_j - 2\lambda_j^\top \Delta_j \nu_j \\
\end{align*}
since $\lambda_j^\top \Delta_j \nu_j = \nu_j^\top \Delta_j \lambda_j $ since $\lambda_j^\top \Delta_j \nu_j\in \mathbb{R}$. Thus, with $S_j=\Delta_j+\sum_{i=1}^n u_i u_i^\top, m_j=S_j^{-1}\left(\sum_{i=1}^n u_i v_{ij}+\Delta_j \nu_j\right)$:

\[\sum_{i=1}^n (v_{ij}-\lambda_j^\top u_i)^2 + (\lambda_j - \nu_j)^\top \Delta_j (\lambda_j - \nu_j)= (\lambda_j-m_j)^\top S_j (\lambda_j-m_j) + \sum_{i=1}^n v_{ij}^2 + \nu_j^\top \Delta_j \nu_j - m_j^\top S_j m_j.\]
By substituting in the exponential, we recognize a law 
\[\phi_r(\lambda_j ; m_j, \sigma_j^2 S_j^{-1})=\dfrac{\det^{1/2}(S_j)}{(2\pi \sigma_j^2)^{r/2}}\exp \left\{ -\frac{1}{2\sigma_j^2} (\lambda_j-m_j)^\top S_j (\lambda_j-m_j)\right\}\]
So,
\small{
\[p(v, \sigma^2, \lambda \mid u, \vartheta) = \prod_{j=1}^{d} g\left(\sigma_j^2; \alpha_j/2, \beta_j^2/2\right) \frac{\det^{1/2}(\Delta_j)}{(2\pi \sigma_j^2)^{n/2}\det^{1/2}(S_j)} \exp \left\{-\frac{1}{2\sigma_j^2} \left(\sum_{i=1}^n v_{ij}^2 + \nu_j^\top \Delta_j \nu_j - m_j^\top S_j m_j\right)\right \} \phi_r(\lambda_j ; m_j, \sigma_j^2 S_j^{-1}).\]}

Let's note $t_j=\sum_{i=1}^n v_{ij}^2+\nu_j^\top \Delta_j \nu_j-m_j^\top S_j m_j$, and we want to integrate this expression with respect to $\lambda_j$ and $\sigma_j^2$. 
\begin{align*}
    I_n(v \mid u, \vartheta)&=
\int p(v, \sigma^2, \lambda \mid u, \vartheta)\, d\sigma^2 \, d\lambda \\
&= \prod_{j=1}^d \int g\left(\sigma_j^2; \alpha_j/2, \beta_j^2/2\right) \frac{\det^{1/2}(\Delta_j)}{(2\pi \sigma_j^2)^{n/2}\det^{1/2}(S_j)} \exp \left\{-\frac{t_j}{2\sigma_j^2}\right \} \underbrace{\int \phi_r(\lambda_j ; m_j, \sigma_j^2 S_j^{-1}) d\lambda_j}_{=1} d\sigma_j^2 \\
&= \prod_{j=1}^d \int g\left(\sigma_j^2; \alpha_j/2, \beta_j^2/2\right) \frac{\det^{1/2}(\Delta_j)}{(2\pi \sigma_j^2)^{n/2}\det^{1/2}(S_j)} \exp \left\{-\frac{t_j}{2\sigma_j^2}\right \} d\sigma_j^2
\end{align*}

We have the Inverse Gamma distribution
\[g\left(\sigma_j^2; \alpha_j/2, \beta_j^2/2\right) = \frac{1}{\Gamma(\alpha_j/2)}(\beta_j^2/2)^{\alpha_j/2} (1/\sigma_j^2)^{\alpha_j/2+1}\exp\left(-\frac{\beta_j^2}{2\sigma_j^2}\right)\]

So, we recognize a new inverse gamma distribution

\begin{align*}
    g\left(\sigma_j^2; \alpha_j/2, \beta_j^2/2\right) &\frac{\det^{1/2}(\Delta_j)}{(2\pi \sigma_j^2)^{n/2}\det^{1/2}(S_j)} \exp \left\{-\frac{t_j}{2\sigma_j^2}\right \} \\
    &= \frac{\det^{1/2}(\Delta_j)}{\Gamma(\alpha_j/2)\pi^{n/2}\det^{1/2}(S_j)} \frac{\beta_j^{\alpha_j}}{2^{n/2+\alpha_j/2}} \frac{1}{(\sigma_j^2)^{n/2+\alpha_j/2+1}} \exp \left\{-\frac{t_j+\beta_j^2}{2\sigma_j^2}\right \} \\
    &= \frac{\det^{1/2}(\Delta_j)}{\Gamma(\alpha_j/2)\pi^{n/2}\det^{1/2}(S_j)} g\left(\sigma_j^2; n/2+\alpha_j/2, (\beta_j^2+t_j)/2\right) \Gamma(n/2+\alpha_j/2) \frac{\beta_j^{\alpha_j}}{(\beta_j^2+t_j)^{n/2+\alpha_j/2}}
\end{align*}

Now by integrating with respect to $\sigma_j^2$ we conclude:

\[
I_n(v \mid u, \vartheta)
=
\prod_{j=1}^d
\frac{\det^{1/2}(\Delta_j)}
{\det^{1/2}(S_j)}
\frac{1}{\pi^{n/2}}
\frac{\Gamma\left(n/2 + \alpha_j/2\right)}
{\Gamma\left(\alpha_j/2\right)}
\frac{\beta_j^{\alpha_j}}
{\left(\beta_j^2 + t_j\right)^{\alpha_j/2 + n/2}}.
\]
    
\end{proof}

\end{appendices}

\end{document}